\newcommand{\infig}[3]{
\begin{figure}[t]
\centering
\includegraphics[width=#1in]{#2}
\caption{#3\label{fig:#2}}
\end{figure}}
\newtheorem{thm}[theorem]{Theorem}
\newtheorem{prp}[lemma]{Proposition}
\newtheorem{lem}[lemma]{Lemma}
\newtheorem{dfn}{Definition}
\newcommand{\cB}{{\cal B}\xspace}
\newcommand{\cH}{{\cal H}\xspace}
\newcommand{\cE}{{\cal E}\xspace}
\newcommand{\cR}{{\cal R}\xspace}
\newcommand{\cL}{{\cal L}\xspace}
\newcommand{\Vol}{\mathrm{Vol}}
\newcommand{\Org}{\mathrm{Org}}
\newcommand{\Dst}{\mathrm{Dst}}
\def\qed{\hspace*{\fill} $\Box$\par\medskip}
\def\<{\left<}
\def\>{\right>}
\def\Pr{\mathrm{Pr}}
\def\e{\mathrm{e}}
\newcommand\card[1]{\lvert #1 \rvert}
\def\rh{\widetilde{h}}
\def\RC{\widetilde{C}}
\def\RT{\widetilde{T}}
\title{Simple Random Walks on Radio Networks\\ {\small(Simple Random Walks on Hyper-Graphs)}}
\author{Chen Avin  \and Yuval Lando  \and  Zvi Lotker }
\institute{Department of Communication Systems Engineering \\
Ben-Gurion University of the Negev,  P.O.B 653, Beer-Sheva 84105,\\
Israel\\
\email{avin@cse.bgu.ac.il, lando@bgu.ac.il, zvilo@cse.bgu.ac.il}}
\date{}
\begin{document}
\maketitle

\begin{abstract}
In recent years, protocols that are based on the properties of random walks on graphs have found many applications 
in communication and information networks, such as wireless networks, peer-to-peer networks and the Web. 
For wireless networks (and other networks), graphs are actually not the correct model of the communication; instead hyper-graphs better capture the communication  over a wireless shared channel. 
Motivated by this example, we study in this paper random walks on hyper-graphs.  
First, we formalize the random walk process on hyper-graphs and generalize key notions from random walks on graphs.
We then give the novel definition of radio cover time, namely,  the expected time of a random walk to be heard (as opposed to visit) by all
nodes.
We then provide some basic bounds on the radio cover, in particular, we show that while on graphs the radio cover time is $O(mn)$,  in
hyper-graphs it is $O(mnr)$ where $n$, $m$ and $r$ are the number of nodes, the number of edges and the rank of the hyper-graph, respectively.
In addition, we define radio hitting times and give a polynomial algorithm to compute them.
We conclude the paper with results on specific hyper-graphs that model wireless networks in one and two dimensions. 
\end{abstract}
\textbf{Keyowrds:} Random walks, hyper-graphs, radio networks, cover time, hitting time, wireless networks.
\newpage
\pagenumbering{arabic}
\section{Introduction}
Random walks are a natural and thoroughly studied approach to
randomized graph exploration. A \emph{simple random walk} is a stochastic process that starts at one node of
a graph and at each step moves from the current node to an adjacent node chosen
randomly and uniformly from the neighbors of the current node. It can also been seen as uniformly selecting an adjacent edge
and stepping over it.
Since this process presents locality, simplicity, low-overhead (i.e, memory space) and
robustness to changes in the network (graph) structure, applications based on random-walk
techniques are becoming more and more popular in the networking
community. In recent years, different authors have proposed the use
of random walk for a large variety of tasks and networks; to name but a few: querying in wireless sensor and ad-hoc networks \cite{sadagoan03active,avin04efficient,alanyali06a-random-walk},
searching in peer-to-peer networks \cite{gkants04peer}, routing \cite{sergio02constrained,braginsky02rumor}, network connectivity \cite{broder89trading}, building spanning trees \cite{broder89generating}, gossiping \cite{kempe03gossip},  membership service \cite{bar-yossef06rawms}, network coding \cite{deb06algebraic} and quorum systems \cite{friedman08probabilistic}.

Two of the main properties of interest for random walks are \emph{hitting times} and the \emph{cover time}.
The hitting time between $u$ and $v$, $h(u,v)$, is the expected time (measured by the number of steps
or in our case by the number of messages) for a random walk starting at $u$ to visit node $v$ for the first time.
The {\em cover time} $C_G$ of a graph $G$ is the expected time taken by a simple random walk to visit all the nodes
in $G$.  This property is relevant to a wide range of algorithmic applications, such as searching, building a spanning tree and query processing
\cite{gkants04peer,wagner98robotic,jerrum97markov,avin04efficient}. Methods of bounding
the cover time of graphs have been thoroughly investigated \cite{matthews88covering,aldous89lower,chandra89electrical,broder89cover,zuckerman90lower}, with the major
 result being that cover time is always at most
polynomial for static graphs. More precisely, it has been shown by Aleliunas \emph{et al.} \cite{aleliunas79random} that   $C_G$ is always $O(mn)$,
 where $m$ is the number of edges in the graph and $n$ is the number of nodes.
Several bounds on the cover time of particular classes of graphs have been obtained, with many positive results:
for almost all graphs the cover time is of order $O(n\log n)$
\cite{chandra89electrical,broder89cover,jonasson98on,jonasson00planar,cooper03cover}.

While simple graphs are a good model for point-to-point communication networks, they do not capture well shared channel networks like wireless
networks and LANs. In wireless networks the channel is shared by many nodes; this, on the one hand, leads to contention, but on the other hand, can be
very useful for dissemination of information. When a node transmits on the shared channel, all other nodes that share the channel can receive or "hear"
the message. This situation, as noted in the past for other wireless network applications \cite{lun06network}, should be modeled as a (directed)
hyper-graph and not as a graph. Hyper-graphs are a generalization of graphs where edges are sets (or ordered sets) of nodes of arbitrary size. A graph is
a hyper-graph with the size of edges equal to two for all edges. For example, in wireless networks, there is a (directed) hyper-edge from each
transmitter to a set of receivers that can encode its message.

Motivated by the hyper-graph model for wireless networks,  in this paper we study random walks on hyper-graphs.
We extend the hitting time and cover time definitions to  the case of hyper-graphs, and in particular wireless networks.
We define the \emph{radio hitting time} from $u$ to $v$ as the expected number of steps for a random walk (to be defined formally later) starting at
$u$ to be heard by $v$ for the first time. Clearly, the radio hitting time is lower than the hitting time, so it will give a better bound on the
time to disseminate information between nodes using random walks on hyper-graphs.
But, while hitting times are well studied on graphs, it is not clear, at first sight, how to compute radio hitting times.
In a similar manner, we define the \emph{radio cover time} as the expected number of steps for a random walk to be heard by all the nodes in the
graph. Again, this will give a better bound on the time to spread information among all the nodes, for example, in a random-walk-based search or query.

To our surprise, we found that there is almost no previous work on random walks on hyper-graphs, and definitely not any theoretical work.
One exception is the experimental study on simple random walks on hyper-graph preformed by Zhou \emph{et al.} \cite{zhou07learning}.
In that work, the authors suggest using a simple random walk on hyper-graphs to analyze complex relationships among objects by learning and
clustering. Our work is general enough to make  contribute in this direction as well.

\subsection{Overview of Our Results and Paper Organization}
The paper contribution covers two main themes. In the first part (sections \ref{sec:models}--\ref{sec:radio}), we present formal
definitions for random walks on hyper-graphs. We extend known parameters and properties of random walks on graphs to the case of hyper-graphs; to
the best of our knowledge this the first rigorous treatment of this topic. We present a deep relation between a random walk on the set of vertices of
the hyper-graph to a random walk on the set of edges of the hyper-graph and between random walks on hyper-graphs and random walks on special bipartite
graphs. We study the undirected and directed cases and base all our definitions on the basic object that describes a hyper-graph, \emph{the incidence
matrix}. Moreover, we formally define the novel notion of radio hitting time and radio cover time, namely, the expected time for a specific node or all nodes to hear
a message carried by a random walk originating at a specific node.  This formalism will be essential tool in pursuing future research on random walks
on hyper-graphs.

The second theme of the paper is to provide algorithms and to prove bounds for the main properties of interest for random walks. In Section \ref{sec:computing},
Theorem \ref{thm:hit_e}, we provide an algorithm to compute the radio hitting time on hyper-graphs. Sections \ref{sec:speedup} and \ref{sec:general}
present general bounds on the radio cover time. In Section \ref{sec:speedup}, Theorem \ref{thm:rcspeedup}, we bound the cover time in terms of the
radio cover, and in Section \ref{sec:general} we generalize famous bounds on the cover times of graphs to radio cover times on hyper-graphs. In Theorem
\ref{thm:matthew} we extend Matthews' bound~\cite{matthews88covering} to radio hitting time and in Theorem  \ref{thm:mnr} we extend the fundamental
bound on the cover time of Aleliunas \emph{et al.} \cite{aleliunas79random} that bounds the cover time of graphs by $O(mn)$, to an $O(mnr)$ bound on
the radio cover time of hyper-graphs, where $n$ is the number nodes, $m$ is the number of edges and $r$ is the size of the maximum edge. Note that
while for graphs, $m$ is at most $n^{2}$, for hyper-graphs $m$ could be exponential, we show that even in this case the bound could be tight.
In Section \ref{sec:unit} we study hyper-graphs that model wireless radio networks. Theorems \ref{thm:1D} and \ref{thm:2D} bound the expected  time for
all nodes in the network to "hear" the message in 1-dimensional and 2-dimensional grids, respectively. These results capture some of the nice
properties of radio cover time; we show that as the size of edges increase the radio cover decreases. But while cover time cannot go below $n \log
n$, radio cover time can be much smaller, as a matter of fact when these grids become the complete graph the radio cover time is 1. 
Therefore these result will have a significant impact on the design of random-walk-based algorithms for wireless networks. Conclusions are then
presented in Section \ref{sec:conc}.
Due to the volume of the results and the space limitations, some of the proofs are presented in the appendix. The proofs, technical at times, also
contain interesting insights into the topic and are part of the full version of the paper.

\section{Models and Preliminaries}\label{sec:models}
We now present formal definitions of the involving objects, in some cases we follow definitions taken form PlanetMath.
\subsection{Definitions}
A (finite, undirected) \emph{graph} $G$ is an ordered pair of disjoint finite sets $(V, E)$ such that $E$ is a subset of the set $V \times V$ of unordered pairs of $V$.
The set $V$ is the set of \emph{vertices} (sometimes called nodes) and $E$ is the set of \emph{edges}. If $G$ is a graph, then $V = V(G)$ is the vertex set of $G$, and $E = E(G)$ is the edge set.
If $v$ is a vertex of $G$, we sometimes write $v \in G$ instead of $v \in V(G)$.

We follow with formal definitions for hyper-graphs.
\begin{dfn}[Hyper-graph]
A hyper-graph $\mathcal{H}$ is an ordered pair $(V, \mathcal{E})$, where $V$ is a set of vertices, and $\mathcal{E}  \subseteq 2^V$ is a set of hyper-edges between the vertices, i.e.,
each hyperedge $e \subseteq V$. 
The rank $r(\cH)$ of a hyper-graph $\cH$ is the maximum cardinality of any of the edges in the hyper-graph. If all edges have the same cardinality $k$, the hyper-graph is said to be $k$-uniform. A graph is simply a 2-uniform hyper-graph.
We use $\card{e}$ to denote the cardinality of the set $e$. For a hyper-edge $e \in \cE$, its degree is define to be $\delta(e) = \card{e}$.
The set $\mathcal{E}(v)=\{e\in \mathcal{E}: v \in e\}$ is the set of all edges that contain the vertex $v$. The \emph{degree} $d(v)$ of a vertex $v$ is the number of edges in $\cE(v)$, i.e., $d(v) = \card{\cE(v)}$.
$\mathcal{H}$ is $k$-\emph{regular} if every vertex has degree $k$.
The set of neighbors of a vertex $v$ is $N(v)=\{u \in V: \{v,u\} \subseteq e \in \cE\}$.
\end{dfn}

Let $V = \{v_1, v_2, ~\ldots, ~ v_n\}$ and $\mathcal{E} = \{e_1, e_2, ~ \ldots, ~ e_m\}$. Associated with any hyper-graph is the $n \times m$
\emph{incidence matrix} $W = (w_{ij})$ where
\[w_{ij} =
\begin{cases}
1 &\text{ if } ~ v_i \in e_j \\
0 &\text{ otherwise }
\end{cases}\]

Note that the sum of the entries in any column is the degree of the corresponding edge.
Similarly, the sum of the entries in a particular row is the degree of the corresponding vertex.
Let $D_{v}$ and $D_{e}$ denote the diagonal matrices of the degrees of the vertices and edges, respectively.



We change the standard definition of a directed hyper-graph 
and use the following definition (the difference between our definition and the standard is that we remove the condition that X,Y are disjoint.):
\begin{dfn}[Directed Hyper-graph]
A directed hyper-graph $\mathcal{H}$ is an order pair $(V, \mathcal{E})$, where $V$ is a set of vertices, and $\mathcal{E}$  is a set of hyper-arcs (i.e., directed hyper-edges). A
hyper-arc $e\in \mathcal{E}$ is an ordered pair $(U, W)$ where $U$ and $W$ are not empty subsets of $V$. The sets U and W are called the origin and
the destination of $e$ and are denoted as $\Org(e)$ and $\Dst(e)$, respectively.
\end{dfn}


We model wireless radio networks as a special type of directed hyper-graph. In radio networks, transmitters send messages on a broadcast-wireless
channel and can therefore be received by a set of receivers. We model this interaction as a directed hyper-edge with one origin (transmitter) and a set
of destinations (receivers) that do not include the origin.
\begin{dfn}[Radio Hyper-graph]
A \textbf{Radio Hyper-graph} is a directed hyper-graph $\cH = (V,\cE)$ in which for every hyper-arc $e \in \cE$, $\card{\Org(e)}=1$ and $\Org(e)
\notin \Dst(e)$.
\end{dfn}

A graph can be converted into a radio hyper-graph in the following natural way: $\cH(V,\cE)= \cR(G(V,E)) $ where for every $v \in G$ we create an edge
$e \in \cE$ for which $\Org(e) = v$ and $\Dst(e)=N(v)$. For example, unit disk graphs \cite{Clark1990unit} are a very popular graph model for wireless
networks; for a unit disk graph $G$,
 we may consider the radio hyper-graph $\cR(G)$, which we believe captures more accurately aspects of wireless networks.

A key notion in a  random walk is a path. If we wish to extend the definition of a simple random walk from graphs into hyper-graphs, we need to
understand the equivalent of path in hyper-graphs.

\begin{dfn}[Hyper-path]
A \emph{hyper-path} in a hyper-graph is a finite sequence of alternating vertices and hyper-edges, beginning and ending with a vertex,
$u_1,f_1,u_2,f_2,u_3,\dots ,f_{n-1},u_n$ where for $1 \le i < n$, $u_i \in V, f_i \in \cE$ and $u_n \in V$ such that every consecutive pair of vertices $u_i$ and $u_{i+1}$ are in $u_i, u_{i+1} \in e_i$.
A directed hyperpath is a hyperpath where $u_i = \Org(f_i)$ and $u_{i+1} \in Dest(f_i)$
\end{dfn}

\subsection{Random Walks on Graphs}
We recall the definition of a simple random walk on a graph $G=(V,E)$ and then modify it to a simple random walk on a hyper-graph.
A random walk on a graph is a Markov chain on the state space $V$ and \emph{probability transition matrix} $P$.
The location of the random walks is a function from discrete time to the set of nodes $V$; we denote this function by $X(t):\mathbb{N}
\rightarrow V$.
The walk starts at some fixed node $X(0)$, and at time step $t$ it moves on the edge connected to the node $X(t)$ to one of its neighbors $X(t+1)$.
Let $e(t)=\{X(t),X(t+1)\}$ be the edge the random walk traversed at time $t$. The random walk is called \emph{simple with self loops} when with
probability $\frac{1}{2}$ the walk stays in the same node and with probability $\frac{1}{2}$ the next node is chosen uniformly at random from the
set
of neighbors of the node $X(t)$, i.e., $P(v,u) = \frac{1}{2d(v)}$ if $\{u,v\} \in E$ and 0 otherwise. Note that the simple random walk chooses the
edge $e(t)$ uniformly from the set of edges connected to $X(t)$. The stationary distribution of a walk, if such exists, is the unique probability
vector $\pi$ s.t. $\pi=\pi P$. It is well known the for the simple random walk on a connected graph the stationary distribution $\pi$ is such that for
every $v
\in V$, $\pi(v) = \frac{d(v)}{2m}$ where $m=\card{E}$.

\section{Random Walks on Hyper-Graphs}\label{sec:rw-hyper}
\subsection{Random Walks on Undirected Hyper-Graphs}
We defined the \emph{simple random walk on hyper-graph} $\cH=(V,\cE)$ as a simple process of visiting the nodes of the hyper-graph in some random
sequential order. The walk starts at some fixed
node $X(0)$. Then, at each time step $t$ it moves on the hyper-edge $e(t) \in \cE$ connected to the node $X(t)$, i.e., $X(t)\in e(t)$.
The walk lands on one of the nodes in $e(t)$, formally $X(t+1)\in e(t)$. We chose $X(t+1)$ at random from $e(t)$.
The random walk is called \emph{simple} when the next edge $e(t)$ is chosen uniformly at random from $\cE(X(t))$,
and then $X(t+1)$ is chosen uniformly at random from $e(t)$.
The process of \emph{visiting} the nodes can, again, be described as a Markov chain with  the state space $V$ and transition matrix $P$.
The walk can move from vertex $v_i$ to the vertex $v_j$ if there is an hyper-edge that contains both vertices; therefore
the probability to move from vertex $v_i$ to $u_j$ is:
\begin{align}
P_{ij} = P(v_i,v_j) = \sum_{k=1}^{\card{\cE}}\frac{w_{ik} w_{jk}}{d(v_i)\delta(e_k)} =
\frac{1}{d(v_i)} \sum_{k=1}^{\card{\cE}}\frac{w_{ik} w_{jk}}{\delta(e_k)}
\end{align}
or alternatively the equation can be written
\begin{align}\label{eq:P}
P(v,u) = \sum_{\substack{e \in \cE,\\ \{v,u\} \subset  e}}\frac{1}{d(v)\delta(e)} =
\frac{1}{d(v)} \sum_{\substack{e \in \cE,\\ \{v,u\} \subset  e}}\frac{1}{\delta(e)}
\end{align}
Let $A$ be the \emph{vertex-edge transition matrix} $A=D_{v}^{-1}W$ and $B$ the
\emph{edge-vertex transition matrix} $B=D_{e}^{-1}W^T$. We can express $P$
in matrix form as $P=D_{v}^{-1}WD_{e}^{-1}W^T = AB$.

The stationary distribution of visiting the vertices is again well defined for this Markov chain and is known to be $\pi(v) = \frac{d(v)}{\Vol(V)}$ where $\Vol(V) = \sum_{u \in V} d(u)$.

Alternately, a random walk on a graph is a Markov chain on the edges of the graph. At each time step, the walk steps to a randomly chosen edge from the set of neighbors of the current edge. The state space of the chain is $\cE$ and the transition matrix is $Q$ and let $Y(t):\mathbb{N} \rightarrow \cE$ denote this process.
The probability to move from edge $e_i$ to $e_j$ is then:
\begin{align}
Q_{ij} = Q(e_i,e_j) = \sum_{k=1}^{\card{V}}\frac{w_{ki} w_{kj}}{\delta(e_i)d(v_k)} =
\frac{1}{\delta(e_i)} \sum_{k=1}^{\card{V}}\frac{w_{ki} w_{kj}}{d(v_k)}
\end{align}
or alternatively,
\begin{align}
Q(e,f) = \sum_{\substack{v \in V,\\ v \in e \cap f}}\frac{1}{\delta(e)d(v)} =
\frac{1}{\delta(e)} \sum_{\substack{v \in V,\\ v \in  e \cap f}}\frac{1}{d(v)}
\end{align}

In matrix form, we can express $Q$ as $Q=D_{e}^{-1}W^TD_{v}^{-1}W=BA$.

The stationary distribution of visiting edges is again well defined for this Markov chain and is known to be $\zeta(e) = \frac{\delta(e)}{\Vol(\cE)}$ where $\Vol(\cE) = \sum_{e \in \cE} \delta(e)$, note $\Vol(\cE) = \Vol(V)$.

Not that $X(t)$ and $Y(t)$ are coupled processes in the following sense: given $X(0)$ the distribution of $X(t), t >1$ can be expressed as follows:
\begin{align}
X(t)= X(0)P^t = X(0)AQ^{t-1}B
\end{align}
Similarly, given $Y(0)$, $Y(t)$ can be expressed in terms of $X(t)$:
\begin{align}
Y(t)= Y(0)Q^t = Y(0)BP^{t-1}A
\end{align}
Moreover $P$ and $Q$ share the same eigenvalues, since it is known that the non-zero eigenvalues of $AB$ and $BA$ are identical for matrices $A,B$
where $\dim(A) = \dim(B^T)$.

\subsection{Random Walks on Hyper-Graphs as Random Walks on Bi-Partite Graphs}
We can describe the simple random walks on an hyper-graph $\cH(V, \cE)$ as a simple random walk $Z(t)$, $t > 0$ on the following bi-partite graph
$\cB(\cH) = G(V \cup \cE, E_{\cB})$. The set of vertices contains both the vertices and the edges of the hyper-graph and the set of edges  $E_{\cB}
\subset V \times \cE$, in particular $(v_i, e_j) \in E_{\cB}$ iff $w_{ij} =1$ and edges are considered undirected. The adjacency matrix describing
$\cB(\cH)$ is the following $mn \times mn$ matrix:
\begin{align} A_{\cB} =
\begin{pmatrix} 0& W\\
W^T& 0
\end{pmatrix}
\end{align}
where $W$ is the incidence matrix of $\cH$.
The associated Markov chain is over the state space $V \cup \cE$ and the transition probability matrix $P_B$ is the following $mn \times mn$ matrix:
\begin{align}
P_{\cB} =
\begin{pmatrix}
0& A\\
B& 0
\end{pmatrix}
\end{align}
Clearly, if $X(0) = Z(0)$, then for $t > 0$, $X(t) = Z(2t)$, and the same holds for $Y(t)$ if $Y(0) = Z(0)$.

\subsection{Random Walks on Directed Hyper-Graphs}

We can define the walk in a similar manner for directed hyper-graphs. Let $\overrightarrow{W} = \Org(W)$ be the sub-matrix of $W$ where $\overrightarrow{w}_{ij} = 1$ only if $v_i \in \Org(e_j)$ and  $\overleftarrow{W}=\Dst(W)$ be the sub-matrix of $W$ where $\overleftarrow{w}_{ij} = 1$ only if $v_i \in \Dst(e_j)$. Let $\overrightarrow{d}(v)$ denote the sum of the row corresponding  to $v$ in $\Org(W)$ and $\overleftarrow{d}(e)$ the sum of the column
corresponding  to $e$ in $\Dst(W)$. Let $\overrightarrow{D}_{v}$ and $\overleftarrow{D}_{e}$ denote the diagonal matrixes for $\overrightarrow{d}(v)$ and $\overleftarrow{d}(e)$, respectively. Now, the transition matrix of entering an edge is  $\overrightarrow{A} = \overrightarrow{D}_{v} \overrightarrow{W}$ and the transition matrix of leaving an edge is  $\overleftarrow{B} = \overleftarrow{D}_{e} \overleftarrow{W}^T$. The transition matrix $P'$ for the Markov chain on the vertices is now $P'=\overrightarrow{A}\overleftarrow{B}$ and the transition matrix $Q'$ for the Markov chain on the edges is now $Q'=\overleftarrow{B}\overrightarrow{A}$. The transition matrix for the walk on
the bi-partitie graph is:
\begin{align}
P'_{\cB} =
\begin{pmatrix}
0& \overrightarrow{A}\\
\overleftarrow{B}& 0
\end{pmatrix}
\end{align}



\section{Radio Hitting Times and Radio Cover Time of Hyper-Graphs}\label{sec:radio}

\subsection{Hitting Times and Radio Hitting Times}
The hitting time $h(v,u)$ on a graph is the \emph{expected} time for a simple random walk starting at $v$ to reach $u$ for the first time.
When extending the notion of hitting time to hyper-graphs there are two basic approaches.
First we can talk about the expected time to \emph{visit} node $u$ for the first time starting at $v$; this naturally extend the results and techniques of hitting times on graphs to hyper-graphs . Second, motived by radio networks, we consider the \emph{radio hitting time}, the expected time it takes for $u$ to hear the message for the first time, i.e, the message was passed on an edge to which $u$ belongs.
We now define this formally: let $\mathcal{H}=(V, \mathcal{E})$ be a hyper-graph.  Let $U\subset V$ be a sub-set of nodes of $\mathcal{H}$. Let $v
\in V$ be the starting node of the random walk on the hyper-graph $\mathcal{H}$, and we define $\inf \emptyset:=\infty$.
\begin{dfn}[hitting time]
Let $T_{v,X}^{U}$ be the random variable that denotes the stoping time
$$T_{v,X}^{U}=\inf \{t\in \mathbb{N}: X(t) \in {U} ,X(0)=v \in V\}.$$
The \emph{(hyper) hitting time} from $v$ to ${U}$ is $h(v,{U})=E[T_{v,X}^{U}]$ (or for short $h_v^{U}$).
\end{dfn}
Note that the definition holds for both graphs and hyper-graphs. Let $h_{\max}$ be the \emph{maximum hitting time}: $h_{\max}=\max_{u,v \in V} h(v,u)$

\begin{dfn}[Radio hitting time]
Let $\RT_{v,X}^{U}$ be the random variable that denotes the stoping time:
$$\RT_{v,X}^{U}=\inf \{t\in \mathbb{N}: e(t) \in {U} ,X(0)=v \in V\}.$$
The \emph{radio hitting time} from $v$ to ${U}$ is $\rh(v,{U})=E[\RT_{v,X}^{U}]$ (or for short $\rh_v^{U}$).
\end{dfn}
Let $\rh_{\max}$ be the \emph{maximum radio hitting time}: $\rh_{\max}=\max_{u,v \in V} \rh(v,u)$
Note that for graphs the hitting times and radio hitting graphs are identical, but for hyper-graphs the radio hitting times are at most at hitting times.

\subsection{Cover Time and Radio Cover Time}
The cover time of a graph is the \emph{expected} time to visit all nodes in the graph, starting from the worst case. The definition for a graph is based on hitting time and extends naturally to hyper-graphs.
The cover time for a simple random walk on the (hyper-) graph starting from a node $v \in V$ is the random time $C_v$ that takes for the simple
random walk starting at $v$ to visit all nodes in $V$. The cover time of a graph is the maximum expected of all cover times. Formally,

\begin{dfn}[cover time]
Let $C_v$ be the random time for a random walk starting at $v$ to visit all the nodes
$C_v=\max_{u \in V} T_v^u.$
The cover time of the graph is:
$C=\max_{v \in V} E[C_v]$
\end{dfn}

The radio cover time, on the other hand, is defined to be as the time for all nodes to "hear" the message, i.e., for all nodes the message passed on
at least one edge they belong to. The definition is a natural extension of the cover time using the radio hitting time:
\begin{dfn}[Radio cover time]
Let $\RC_v$ be the random time for all the nodes to "hear" the random walk starting at $v$
$\RC_v=\max_{u \in V} \RT_v^u.$
The radio cover time of the graph is:
$\RC=\max_{v \in V} E[\RC_v]$
\end{dfn}

\subsection{Speedup of Radio Cover Time}
It is clear that the radio hitting time and radio cover time are faster than the hitting time and cover, respectively. We are interested in the speedup of radio, i.e., the ratio between hitting (cover) time and radio hitting (cover) time.
Let the  \emph{speedup} of radio hitting time and cover time for a hyper-graph $\cH$ be:
\begin{align*}
\text{hitting speedup: } S_{\cH}(\rh) = \frac{h_{\max}}{\rh_{\max}} & & \text{cover speedup: }S_{\cH}(\RC) = \frac{C}{\RC}
\end{align*}

\section{Computing the Radio Hitting Time}\label{sec:computing}
Computing the hitting time of Markov chains is well known. Recall our notation  $h_{v}^{{U}} = E[T_{v,X}^{{U}}]$ is the expected hitting time for a walk
starting at $v$ to hit a node in ${U}$ for the first time; then it can be compute as follows.
\begin{prp}[\cite{stirzaker05stochastic}]\label{hit time markov} The mean hitting times are the minimal non negative solution to:
\[
h_{v_{i}}^{{U}} =
\begin{cases}
0 &\qquad v_{i}\in {U}\\
1 + \displaystyle{\sum_{j} P_{ij} h_{v_j}^{U}} &\qquad v_{i} \notin {U}
\end{cases}
\] 
\end{prp}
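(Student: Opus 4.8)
The plan is to prove the two assertions separately: first that the mean hitting times $h_{v_i}^{U}$ actually satisfy the displayed system, and second that among all non-negative solutions they are the \emph{minimal} one, i.e.\ any other non-negative solution dominates them pointwise.

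For the first part I would use \emph{first-step analysis}. The case $v_i \in U$ is immediate from the definition of $T_{v,X}^{U}$, which forces $T=0$ and hence $h_{v_i}^{U}=0$. For $v_i \notin U$ the walk must take at least one step, and by the Markov property, conditioned on the first move $X(1)=v_j$ the remaining time to reach $U$ has the same law as $T_{v_j,X}^{U}$. Thus $T_{v_i,X}^{U} = 1 + T_{X(1),X}^{U}$, and taking expectations via the law of total expectation over $X(1)$ (distributed according to the $i$-th row of $P$) yields $h_{v_i}^{U} = 1 + \sum_j P_{ij}\,h_{v_j}^{U}$.

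For the second part, let $(g_{v_i})_i$ be any non-negative solution of the system; I would show $h_{v_i}^{U} \le g_{v_i}$ by repeatedly unfolding the recursion. For $v_i \in U$ we have $g_{v_i}=0=h_{v_i}^{U}$. For $v_i \notin U$, since $g_{v_j}=0$ whenever $v_j \in U$, substituting the recursion into itself $m$ times gives
\[
g_{v_i} = \sum_{t=1}^{m} \Pr\bigl(T_{v_i,X}^{U} \ge t\bigr) + \sum_{v_j \notin U} \Pr\bigl(X(1)\notin U, \dots, X(m-1)\notin U, X(m)=v_j\bigr)\, g_{v_j},
\]
where the constant terms telescope: the coefficient accumulated at level $t$ is $\sum$ over length-$(t-1)$ paths avoiding $U$ of $P_{i j_1}\cdots P_{j_{t-2} j_{t-1}}$, which is exactly $\Pr(X(1),\dots,X(t-1)\notin U) = \Pr(T_{v_i,X}^{U}\ge t)$. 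Because $g$ is non-negative the remainder sum is non-negative, so $g_{v_i} \ge \sum_{t=1}^{m} \Pr(T_{v_i,X}^{U}\ge t)$ for every $m$. Letting $m \to \infty$ and invoking the tail-sum identity $E[T]=\sum_{t\ge 1}\Pr(T\ge t)$ for the non-negative integer-valued variable $T_{v_i,X}^{U}$ gives $g_{v_i} \ge h_{v_i}^{U}$, as required.

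The main obstacle is making the unfolding rigorous: I must confirm by induction on $m$ that the telescoped constant terms are precisely the tail probabilities $\Pr(T_{v_i,X}^{U}\ge t)$, and I must handle the case $h_{v_i}^{U}=\infty$. The latter is delivered for free by the \emph{monotone} limit $m\to\infty$ of the partial tail sums, whose supremum equals the (possibly infinite) expectation; hence if $h_{v_i}^{U}=\infty$ the inequality forces $g_{v_i}=\infty$ too. I would stress that the non-negativity hypothesis on $g$ is exactly what allows the remainder term to be discarded and is indispensable, since without it the minimality claim can fail.
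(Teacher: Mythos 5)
Your proof is correct. The paper does not prove this proposition at all---it is quoted as a known result from the cited textbook reference---and your argument (first-step analysis via the Markov property for the equations, then unfolding the recursion $m$ times, discarding the non-negative remainder, and passing to the monotone limit with the tail-sum identity $E[T]=\sum_{t\ge 1}\Pr(T\ge t)$ to get minimality) is exactly the standard textbook proof of this fact, so there is nothing to reconcile between your route and the paper's.
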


Since the process $X$ has the coupled process $Y$, it is useful to define stopping time on the process $Y$.

\begin{dfn}[Y hitting time ]
For any $\mathcal{A}\subset \cE$. Let $T_{e,Y}^\mathcal{A}$ be the random variable that denotes the stoping time
$$T_{e,Y}^\mathcal{A}=\inf \{t\in \mathbb{N}: Y(t) \in \mathcal{A} ,Y(0)=e \in \cE\}.$$
The \emph{(Y hyper) hitting time} from $e$ to $\mathcal{A}$ is $h(e,\mathcal{A})=E[T_{e,Y}^\mathcal{A}]$ (or for short $h_e^\mathcal{A}$).
\end{dfn}
As for Markov chains sometimes we are not given a specific starting position but a distribution $\lambda_0$. In this case, we will defined the stoping
time to be the inner product of the indicator functions with the hitting time random vector. For example let $$\lambda_0=X(0)A.$$
Note that $\lambda_0$ is a probability distribution over $\cE$, and Y(0) is the corresponding random variable. In this case we define
the event $F_e=\{Y(0)=e\}$. Using these events, we can define the indicator functions $\mathbb{I}_{F_e}$. Hence

\begin{equation}\label{dis-hitting-time}
T_{\lambda_0,Y}^\mathcal{A}=\sum_{e\in \cE} \mathbb{I}_{F_e} T_{e,Y}^\mathcal{A}
\end{equation}
We note that we can do the same for the process $X$, but we will not do this in this paper.

When we try to computed the radio hitting time on hyper graphs it is not necessary for the random walk to visit/hit the set ${U}$ of
vertices. A vertex from the
set ${U}$ must receive (hear) the message from one of its neighbors. This process is best understood when we look at the process $Y(t)$. To compute the
radio hitting time of the set ${U}$, we will defined the set
$$\cE(U)= \{ e\in \cE : e\cap {U} \neq \emptyset \}.$$ We can use the connection between the $X$ process and the $Y$ process to compute
the radio hitting time. The next lemma shows that the radio hitting time between nodes can be formulated as the hitting time between sets of edges.
\begin{lem} For all $t\in \mathbb{N}$
$$\Pr\left[\RT_{v,X}^{U}\leq t\right]= \Pr\left[T_{\lambda(0),Y}^{\cE({U})}\leq t\right]
$$ where $\lambda_0=X(0)A.$
\end{lem}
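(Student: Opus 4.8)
The plan is to show that the two stopping times appearing in the statement are in fact the \emph{same} random variable, once the correct coupling between the vertex walk $X$ and the edge walk $Y$ is made explicit; the equality of their distribution functions then holds for every $t$ at once, with no further computation.

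First I would make precise the edge sequence generated by the vertex walk. Running the simple random walk on $\cH$ from $X(0)=v$ produces the alternating sequence $v=X(0),e(0),X(1),e(1),X(2),\dots$, in which each edge $e(t)$ is drawn uniformly from $\cE(X(t))$ and each vertex $X(t+1)$ is drawn uniformly from $e(t)$. The key observation is that the edge-valued sequence $(e(t))_{t\ge 0}$ is itself a Markov chain: given $e(t)$, the next vertex $X(t+1)$ is uniform on $e(t)$ (this is exactly the action of $B=D_e^{-1}W^T$) and then $e(t+1)$ is uniform on $\cE(X(t+1))$ (the action of $A=D_v^{-1}W$), so the one-step law of $e(\cdot)$ is $BA=Q$ and depends on the past only through $e(t)$. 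Moreover the initial edge $e(0)$ is uniform on $\cE(v)$, and this uniform law, written as a row vector, is precisely the $v$-th row of $A$, i.e.\ $X(0)A=\lambda_0$. Hence $(e(t))_{t\ge 0}$ has the same initial distribution $\lambda_0$ and the same transition matrix $Q$ as the process $Y$, and therefore the same law; equivalently, on the probability space carrying the walk $X$ we may simply set $Y(t):=e(t)$, which realises the coupling of $X$ and $Y$.

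With this identification the proof reduces to translating the two hitting events. I would unfold the phrase ``$e(t)$ is heard by $U$'' in the definition of $\RT_{v,X}^{U}$: the message carried along $e(t)$ reaches $U$ exactly when $e(t)\cap U\neq\emptyset$, i.e.\ when $e(t)\in\cE(U)$. Since $Y(t)=e(t)$, this is the event $\{Y(t)\in\cE(U)\}$, so the defining index sets of the two infima coincide pathwise,
\[
\{t\in\mathbb N:\ e(t)\in\cE(U)\}=\{t\in\mathbb N:\ Y(t)\in\cE(U)\}.
\]
Therefore $\RT_{v,X}^{U}=T_{\lambda_0,Y}^{\cE(U)}$ as random variables, in particular $\{\RT_{v,X}^{U}\le t\}=\{T_{\lambda_0,Y}^{\cE(U)}\le t\}$ for every $t$, and taking probabilities gives the claim.

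The substantive step is the middle paragraph: verifying that $(e(t))$ is genuinely Markov with transition $Q$ and, above all, that its starting distribution is exactly $\lambda_0=X(0)A$ rather than the distribution reached after an extra half-step. A convenient way to keep the indexing honest is to view both walks inside the bipartite walk $Z$ on $V\cup\cE$ with $Z(0)=v$, for which $X(t)=Z(2t)$ and $e(t)=Z(2t+1)=Y(t)$; this makes transparent that the first edge $e(0)=Z(1)$ is distributed as $\lambda_0$ and that no off-by-one error enters the correspondence between $\RT_{v,X}^{U}$ and $T_{\lambda_0,Y}^{\cE(U)}$. Everything after that is bookkeeping.
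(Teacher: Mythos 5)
Your proposal is correct, and its core idea is the same as the paper's: identify the edge sequence $e(0),e(1),\dots$ generated by the vertex walk with the edge process $Y$ started from $\lambda_0=X(0)A$, and observe that ``$U$ hears the walk at time $t$'' is exactly the event $\{e(t)\in\cE(U)\}$. The difference is in execution, and yours is tighter. The paper takes the coupling as given (it was only asserted earlier, via the algebraic identities relating $P$ and $Q$) and then proves the equality of distribution functions by induction on $t$, showing $\Pr[\RT_{v,X}^{U}=k]=\Pr[T_{\lambda_0,Y}^{\cE(U)}=k]$ at each step; as written, that induction is loose — its base case claims that $t=0$ forces $u=v$, which is not what the definitions say (at $t=0$ the relevant event is $e(0)\in\cE(U)$, which can occur for any $u$ adjacent to $v$), though the argument is salvageable. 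You instead do the one piece of work the paper skips: verifying that $(e(t))_{t\ge 0}$ is genuinely a Markov chain with transition matrix $Q=BA$ and, crucially, initial law $\lambda_0$ (the uniform distribution on $\cE(v)$, i.e.\ the $v$-th row of $A$). Once that is done, the induction becomes unnecessary: the two stopping times are the same functional of the same (in law) process, indeed pathwise equal under the coupling $Y(t):=e(t)$, so equality of distributions holds for all $t$ simultaneously. Your closing remark that the bipartite walk $Z$, with $X(t)=Z(2t)$ and $Y(t)=Z(2t+1)$, certifies there is no off-by-one error is exactly the right sanity check, and it is the cleanest way to see why $e(0)$ has law $\lambda_0$ rather than a half-step-shifted distribution. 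In short: same key lemma, but your proof substitutes an explicit coupling plus pathwise identity for the paper's induction, which makes it both shorter and more rigorous.
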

\begin{proof}
We prove the lemma for the case $u={U}\in V$; the general case follows the same arguments. Note that in this case $\cE(v)$ is the set of all hyper-edges that contain the node $v$. From the definition of $\RT_{v,X}^u$, and $T_{Y(0),Y}^{\cE({U})}$ it follows that $\RT_{v,X}^{U},T_{\lambda(0),Y}^{\cE({U})}\in \mathbb{N}$. We therefore can prove the lemma by induction.

For the base of the induction assume that $t=0$; in this case it follows that $u=v$ and the lemma follows.

Assume the lemma is true for all $t<k$, we prove the lemma for the case $t=k$. To prove the induction step, we have to prove that

$$\Pr\left[\RT_{v,X}^{U}= k\right]= \Pr\left[T_{\lambda(0),Y}^{\cE({U})}= k\right]
$$
Since the process $X,Y$ are coupled processes, it follows from the induction hypothesis that both processes did not radio hit the node $u$ before time $k$.

Now assume that $\RT_{v,X}^{U}= k$; this means that $u\in e(k)$. Therefore it follows from the definition of $\cE(u)$ that the hyper-edge $e(k)$ is a an element  $ e(k) \in \cE(u) $. Therefore by the definition of the stopping time $T_{\lambda(0),Y}^{\cE({U})}$ it follows that $T_{\lambda(0),Y}^{\cE({U})}=k$.

For the other direction, assume that $T_{\lambda_0,Y}^{\cE({U})}=k$; in this case using the induction hypothesis and the fact that both process are coupled processes, it follows that $\RT_{v,X}^{U}> k-1$. Moreover, since $T_{\lambda_0,Y}^{\cE({U})}=k$, it follows that $e(k)\in \cE(u)$. Therefore $\RT_{v,X}^{U}= k$, and the lemma follows.
\qed
\end{proof}
Now we can use equation \ref{dis-hitting-time} together with the previous lemma and compute the radio hitting time for a node $u$ starting
from a node $v$, $\rh_{v}^{u}$. We need first to solve the following linear system (see the linear system in Theorem \ref{thm:hit_e}) and then take a convex
combination of the the variables $\{h_e^{\cE({U})}: e\in \cE\}$ according to $\lambda_0$.
\begin{thm}\label{thm:hit_e} The radio hitting time $\rh_{v}^{u}$ is:
\begin{align}
\rh_{v}^{u} = E[\RT_{v,X}^u] = E[T_{\lambda_0,Y}^{\cE(u)}] = E_{\lambda_0}[T_{e,Y}^{\cE(u)}]
\end{align}
where $\lambda_0 = X(0)A$ and the Y hitting times are the minimal non negative solution to:
\[
h_{e_i}^{\cE(u)} =
\begin{cases}
0 &\qquad e_{i}\in \cE(u)\\
1 + \displaystyle{\sum_{j} Q_{ij} h_{e_{j}}^{\cE(u)}} &\qquad e \notin  \cE(u)
\end{cases}
\] 
\end{thm}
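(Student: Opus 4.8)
The plan is to assemble the theorem from the three ingredients already in place: the preceding Lemma, the decomposition in equation \ref{dis-hitting-time}, and Proposition \ref{hit time markov} applied to the edge chain $Y$. The first equality $\rh_v^u = E[\RT_{v,X}^u] = E[T_{\lambda_0,Y}^{\cE(u)}]$ is immediate from the definition of radio hitting time together with the Lemma. The Lemma gives equality of the full distributions, $\Pr[\RT_{v,X}^u \le t] = \Pr[T_{\lambda_0,Y}^{\cE(u)} \le t]$ for every $t \in \mathbb{N}$; since both are nonnegative integer-valued stopping times, I would pass to expectations through the tail-sum formula $E[T] = \sum_{t \ge 0} \Pr[T > t]$, so that equal tail probabilities force equal expectations. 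This formulation has the advantage of treating the possibly-infinite case uniformly: if either expectation diverges, so does the other, and the stated identity still holds in $[0,\infty]$.

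For the second equality $E[T_{\lambda_0,Y}^{\cE(u)}] = E_{\lambda_0}[T_{e,Y}^{\cE(u)}]$, I would take expectations in the decomposition $T_{\lambda_0,Y}^{\cE(u)} = \sum_{e \in \cE} \I_{F_e} T_{e,Y}^{\cE(u)}$ from equation \ref{dis-hitting-time}, where $F_e = \{Y(0) = e\}$. The events $F_e$ partition the sample space with probabilities $\lambda_0(e)$, and conditioning on $F_e$ makes $T_{e,Y}^{\cE(u)}$ the hitting time of a walk genuinely started at $e$; hence $E[\I_{F_e} T_{e,Y}^{\cE(u)}] = \lambda_0(e)\, h_e^{\cE(u)}$. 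Summing yields $E_{\lambda_0}[T_{e,Y}^{\cE(u)}] = \sum_{e \in \cE} \lambda_0(e)\, h_e^{\cE(u)}$, the advertised convex combination, where recall $\lambda_0 = X(0)A$ is a probability vector over $\cE$ because $A = D_{v}^{-1} W$ is row-stochastic.

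It remains to characterize the individual quantities $h_e^{\cE(u)}$. Since $Y$ is itself a Markov chain on the state space $\cE$ with transition matrix $Q = BA$, I would apply Proposition \ref{hit time markov} directly to $Y$ with absorbing target set $\cE(u)$. First-step analysis on $Y$ then produces exactly the stated system, namely $h_{e_i}^{\cE(u)} = 0$ for $e_i \in \cE(u)$ and $h_{e_i}^{\cE(u)} = 1 + \sum_j Q_{ij} h_{e_j}^{\cE(u)}$ otherwise, with the mean hitting times being its minimal nonnegative solution.

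The main obstacle I anticipate is the measure-theoretic justification in the middle step, specifically the identity $E[\I_{F_e} T_{e,Y}^{\cE(u)}] = \lambda_0(e)\, h_e^{\cE(u)}$. This requires that, conditioned on the starting edge $Y(0) = e$, the subsequent evolution of $Y$ be independent of how $e$ was selected (here, drawn from $\lambda_0 = X(0)A$), which is precisely the Markov property of $Y$; care is also needed to interchange expectation with the sum over $\cE$, justified by nonnegativity via monotone convergence, and to confirm that the coupling underlying the Lemma indeed realizes $Y(0) \sim \lambda_0$ as the first edge traversed by the $X$-walk from $X(0) = v$.
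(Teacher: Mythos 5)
Your proposal is correct and follows essentially the same route as the paper: the paper likewise assembles the theorem from the preceding coupling lemma (equality of the distributions of $\RT_{v,X}^{u}$ and $T_{\lambda_0,Y}^{\cE(u)}$), the decomposition $T_{\lambda_0,Y}^{\cE(u)}=\sum_{e\in\cE}\I_{F_e}T_{e,Y}^{\cE(u)}$ to obtain the convex combination $\sum_{e}\lambda_0(e)\,h_e^{\cE(u)}$, and Proposition \ref{hit time markov} applied to the edge chain $Y$ with transition matrix $Q$ to characterize the $h_e^{\cE(u)}$ as the minimal nonnegative solution of the stated linear system. Your added care (tail-sum passage to expectations and the Markov-property justification of $E[\I_{F_e}T_{e,Y}^{\cE(u)}]=\lambda_0(e)\,h_e^{\cE(u)}$) only makes explicit what the paper leaves implicit.
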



%



\section{The Speed-up of Radio Cover Time}\label{sec:speedup}
Clearly, the radio cover time is at most the cover time of the graph, but how much smaller it could be? We now show that it cannot be too small and the speedup of the radio cover time is bounded by $O(n \log n)$. This results in tight since there are graph for which the speedup is $\Omega(n\log n)$.

\begin{thm}\label{thm:rcspeedup}
Let $\cH=(V,\cE)$ be a hyper-graph.
Then $C\le O((n\log n) \RC)$ so $S_{\cH}(\RC)=O(n \log(n))$.
\end{thm}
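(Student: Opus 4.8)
The plan is to compare $C$ and $\RC$ indirectly, routing through the (visiting) hitting time $h_{\max}$ and the radio hitting time $\rh_{\max}$, so that the $\log n$ factor comes from Matthews' bound while the remaining factor of $n$ comes from a pointwise comparison of hitting and hearing. Concretely, I would prove three facts and multiply them: (i) the classical Matthews upper bound $C \le h_{\max}\,H_n$, where $H_n=\sum_{k=1}^{n}1/k=O(\log n)$, applied to the vertex chain $X$; (ii) the elementary lower bound $\RC \ge \rh_{\max}$; and (iii) the key inequality $h_{\max}\le r\,(\rh_{\max}+1)$ relating visiting a node to merely hearing it. Since $r\le n$ and $\rh_{\max}\ge 1$, chaining these gives $C \le h_{\max}H_n \le r(\rh_{\max}+1)H_n \le 2r\,\rh_{\max}\,H_n \le 2r H_n\,\RC = O(n\log n)\,\RC$, which is exactly the statement $S_{\cH}(\RC)=O(n\log n)$.

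Fact (ii) is immediate: $E[\RC_v]=E[\max_u \RT_{v,X}^{u}]\ge \max_u E[\RT_{v,X}^{u}]=\max_u \rh_v^u$, and maximizing over $v$ yields $\RC\ge \rh_{\max}$. Fact (i) is Matthews' theorem~\cite{matthews88covering} for the finite Markov chain $X$ on $V$; the hyper-graph structure plays no role there.

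The heart of the argument, and the step I expect to be the main obstacle, is Fact (iii). The driving observation is that the walk can land on $u$ only one step after an edge containing $u$ is traversed: if $X(t+1)=u$ then $u\in e(t)$, and since $X(t+1)$ is uniform in $e(t)$ we have $\Pr[X(t+1)=u\mid e(t)]=1/\delta(e(t))\ge 1/r$, irrespective of the past. I would therefore decompose the trajectory until the first visit to $u$ into \emph{attempts}: each attempt runs from the current vertex until $u$ is radio-hit (that is, until some selected edge contains $u$) and then performs the single landing step. Starting from any vertex $w$, the expected time to radio-hit $u$ is exactly $\rh_w^u\le \rh_{\max}$, so each attempt has conditional expected length at most $\rh_{\max}+1$; and each attempt ends in an actual visit with probability at least $1/r$, independently of the history. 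Hence the number $T$ of attempts is dominated by a geometric variable with success probability $1/r$, giving $E[T]\le r$, and a Wald-type estimate yields $h_v^u\le E[T]\,(\rh_{\max}+1)\le r(\rh_{\max}+1)$ for all $v,u$, i.e. $h_{\max}\le r(\rh_{\max}+1)$.

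The technical points I would be careful about in (iii) are two. First, the Bernoulli domination: at each radio-hit the conditional probability of landing on $u$ is at least $1/r$ regardless of everything that happened before, so the attempts behave like independent trials and $E[T]\le r$. Second, the Wald-type inequality $E\!\left[\sum_{i=1}^{T}G_i\right]\le (\rh_{\max}+1)\,E[T]$, which I would justify by writing $E[G_i\,\mathbb{I}_{\{T\ge i\}}]\le (\rh_{\max}+1)\Pr[T\ge i]$ using that $\{T\ge i\}$ is measurable with respect to the history before attempt $i$, and summing over $i$. Finiteness (connectivity of $\cH$) is assumed throughout, since otherwise both $C$ and $\RC$ are infinite and the statement is vacuous. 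Finally, the bound is tight: for the radio hyper-graph $\cR(K_n)$ one has $\RC=1$ while $C=\Theta(n\log n)$, so the speedup is $\Theta(n\log n)$.
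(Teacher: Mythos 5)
Your proposal is correct, but it takes a genuinely different route from the paper's proof. The paper argues globally: it splits the trajectory into successive radio-cover epochs, each of expected length at most $\RC$; within each epoch every vertex $v$ hears the walk at least once, and at the step immediately after it first hears it the walk lands on $v$ with probability at least $1/n$ (uniform landing on an edge of size at most $n$); a coupon-collector argument over epochs then shows that $O(n\log n)$ epochs suffice to visit all vertices, giving $C=O((n\log n)\,\RC)$. You argue per pair of vertices: the same uniform-landing observation, applied at each successive radio hit of $u$, shows that the number of radio hits needed before an actual visit of $u$ is dominated by a geometric variable with success probability $1/r$, which yields your key lemma $h_{\max}\le r(\rh_{\max}+1)$; Matthews' bound $C\le h_{\max}H_n$ and the trivial $\rh_{\max}\le\RC$ then finish the job. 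Both proofs hinge on the identical probabilistic fact --- a visit can only occur one step after a hearing, and then occurs with probability at least one over the edge size --- but your decomposition is per-target rather than per-epoch, and it routes through Matthews' theorem instead of coupon collecting. Your route buys two things the paper's does not: a sharper speedup bound $S_{\cH}(\RC)=O(r\log n)$, since you can keep the factor $r=r(\cH)$ where the paper relaxes $1/\delta(e)\ge 1/r$ to $1/n$, and a reusable pointwise comparison between hitting and radio hitting times that is interesting in its own right. The paper's route buys self-containment (no appeal to Matthews) and directly produces the cover-time statement without passing through $h_{\max}$. Two caveats on your write-up, neither fatal: first, your chaining uses $\rh_{\max}\ge 1$, which holds under the paper's time-counting convention (the paper itself asserts the one-hyper-edge graph has $\RC=1$), but would need the equivalent patch $\rh_{\max}+1\le 2\RC$ via $\RC\ge 1$ if one worried about degenerate conventions; second, you must invoke only the classical Matthews bound (equivalently the first half of Theorem~\ref{thm:matthew}), not its second half, since the paper proves that second half using Theorem~\ref{thm:rcspeedup} --- your proof does exactly this, so there is no circularity.
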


\begin{proof}
Assume we start at node $s(1)$, i.e. $X(0)=s(1)$. Denote the first radio cover time by $\RC_{s(1)}^{(1)}$. We defined using induction
$s(i+1):=X\left(\RC_{s(i)}^{(i)}+1\right)$. Denote the $\RC_{s(i)}^{i+1}$-radio cover time to be the first time the process $X$ do a complete radio
cover time after the time $\RC_{s(i)}^{(i)}.$ Clearly, by the definition of the radio cover time and the linearity of the expectation for all $i\in
\mathbb{N}$, it follows that $E[\RC_{s(i)}^{i+1}]\leq (i+1) \RC.$
For each time we complete a radio cover all vertices in $V$, will have heard the random walks at least once. Assume the vertex $v$ heard
the random walk at time
$t_v^i$ for the first time in the $i$ radio cover. At each time, the probability that the random walk will land on the vertex $v$ at time $t_v^i$ is at
least $\frac{1}{n}$ i.e., $\Pr[X(t_v^i)=v]\geq \frac{1}{n}$. Now the proof follows from the coupon collector argument.
\qed
\end{proof}
We note that the previous theorem is tight. Consider a hyper-graph with $V=\{1,2,...,n\}$ nodes and one single hyper-edge $\cE=\{V\}$. 
It is clear
that the radio cover time of this hyper-graph is one. While by the coupon collector argument the cover time for this hyper-graph is $O(n \log n).$
\section{General Bounds for the Radio Cover Time}\label{sec:general}
%
Our first general bound on the cover time in an extension to Matthews' bound~\cite{matthews88covering} on the cover time.
\begin{thm}\label{thm:matthew}
Let $\cH=(V,\cE)$ be a hyper-graph, then:
\begin{align*}
C=O(h_{\max} \log n)  \text{ and } \RC=O(\rh_{\max} \log n)
\end{align*}
where $\card{V}=n$
\end{thm}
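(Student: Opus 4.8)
The plan is to adapt the classical Matthews bound, which states that for any Markov chain $C = O(h_{\max}\log n)$, where $h_{\max}$ is the maximum pairwise hitting time. Since the radio cover time statement is structurally identical to the standard cover time statement (with $h$ replaced by $\rh$ and visit-times by hear-times), I would prove the first assertion $C=O(h_{\max}\log n)$ by the standard argument and then observe that the \emph{same} argument applies verbatim to the radio quantities, because by the earlier development (the lemma relating $\RT_{v,X}^{U}$ to $T_{\lambda_0,Y}^{\cE(U)}$ and Theorem \ref{thm:hit_e}) the radio hitting times $\rh_v^u$ are themselves genuine expected hitting times of the coupled edge-process $Y$, and hence inherit all the properties of Markov-chain hitting times that Matthews' argument uses.

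First I would recall the core of Matthews' coupon-collector-style argument for the ordinary cover time. Fix a random ordering $v_{\pi(1)},\dots,v_{\pi(n)}$ of the $n$ vertices and run the walk. Let $L_k$ be the time at which the walk has first visited all of $v_{\pi(1)},\dots,v_{\pi(k)}$, and consider the increments $L_k - L_{k-1}$. The key observation is that the expected cover time $C$ can be bounded in terms of $\max_k E[L_k-L_{k-1}]$, and that when a new vertex is reached last in the random order, the expected additional time is, in expectation over the ordering, at most $h_{\max}/k$. Summing the harmonic series $\sum_{k=1}^n h_{\max}/k = h_{\max} H_n = O(h_{\max}\log n)$ then yields $C = O(h_{\max}\log n)$. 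I would present this in the form: for any fixed start $v$, $E[C_v] \le h_{\max} H_n$ by averaging over a uniformly random permutation of the target set and using the fact that the last-visited-so-far vertex contributes expected time at most $h_{\max}$ with probability $1/k$ of being the $k$-th distinct one.

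For the radio statement I would then argue that nothing in this combinatorial argument used any property of $h$ beyond (i) $\rh(v,u)\le \rh_{\max}$ for all pairs, and (ii) the strong Markov property needed to restart the walk from wherever it currently sits once a partial target set has been heard. Property (i) is immediate from the definition of $\rh_{\max}$. Property (ii) holds because, by the earlier lemma, ``being heard by $u$'' is exactly the event $\{e(t)\in\cE(u)\}$, i.e. a hitting event of the edge-chain $Y(t)$, which is itself a Markov chain; so the same permutation/coupon-collector decomposition applies with $L_k$ now the time all of the first $k$ targets have \emph{heard} the walk. Thus $\RC = O(\rh_{\max}\log n)$ follows by the identical summation $\sum_{k=1}^n \rh_{\max}/k = O(\rh_{\max}\log n)$.

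The main obstacle is making the restart step rigorous for the radio model. In the classical proof one uses the strong Markov property to say that from the moment the partial set is covered, the time to reach a specified not-yet-covered vertex is again bounded by $h_{\max}$. In the radio setting the natural stopping time is defined on the \emph{edge} process $Y$ rather than the vertex process $X$, so I would need to phrase the whole argument in terms of $Y(t)$ and the hitting sets $\cE(u)$, invoking the coupling $X(t)\leftrightarrow Y(t)$ established in Section \ref{sec:rw-hyper} to translate back to ``heard by $u$'' events. The care required is that ``$u$ is heard'' corresponds to hitting the \emph{set} $\cE(u)$, not a single edge, so the relevant quantity bounding each increment is $\max_{v,u}\rh_v^u = \rh_{\max}$ uniformly, which is exactly what the definition supplies; once this is set up the harmonic-sum estimate goes through unchanged.
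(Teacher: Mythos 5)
Your proposal is correct in outline, but for the radio half it takes a genuinely different route from the paper. The paper handles $C=O(h_{\max}\log n)$ exactly as you do, by invoking Matthews' bound; for $\RC=O(\rh_{\max}\log n)$, however, it does \emph{not} rerun Matthews' permutation argument. Instead it uses amplification: by Markov's inequality a walk segment of length $\e\cdot\rh_{\max}$ started at any vertex fails to be heard by a fixed target $v$ with probability at most $1/\e$; concatenating $i=\lceil 2\ln n\rceil$ such segments drives the failure probability down to $n^{-2}$ (each segment restarts at the vertex occupied at a deterministic time, so the vertex-indexed quantity $\rh_{\max}$ applies verbatim); a union bound over the $n$ targets then radio-covers everything with probability $1-1/n$, and the rare failure event is absorbed using Theorem \ref{thm:rcspeedup}. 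Your route---Matthews' permutation/harmonic-sum argument run directly on the coupled edge chain $Y(t)$ with the set targets $\cE(u)$, justified by Theorem \ref{thm:hit_e}---is also sound, gives a slightly better constant (about $\rh_{\max}H_n$ versus about $2\e\,\rh_{\max}\ln n$), and avoids the paper's somewhat delicate handling of the failure term. But one step of yours needs repair: at the stopping time $L_{k-1}$ the chain $Y$ sits at an arbitrary \emph{edge}, whereas $\rh_{\max}=\max_{u,v\in V}\rh(v,u)$ is by definition a maximum over \emph{vertex} starting points (through $\lambda_0=X(0)A$), so the increments are not bounded by $\rh_{\max}$ ``exactly as the definition supplies.'' The fix is one line: from any edge $e\notin\cE(u)$ the walk lands after one step on a vertex $x\in e$, hence $h_e^{\cE(u)}\le 1+\max_{x\in e}\rh(x,u)\le 1+\rh_{\max}$, and the harmonic sum then gives $\RC\le(1+\rh_{\max})H_n=O(\rh_{\max}\log n)$. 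A second point worth making explicit: with set targets a single step can make several vertices hear simultaneously; this only helps, since $L_k>L_{k-1}$ requires the set $\cE(u_{\pi(k)})$ to be hit \emph{strictly} last among the first $k$ targets, an event of probability at most $1/k$ by exchangeability of the random permutation.
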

The first part of the theorem follows directly from Matthews' bound for the cover time of time homogenous strong Markov process~\cite{matthews88covering}:
For any reversible Markov chain on a graph ~$G$,
\begin{displaymath} h_{min}\cdot H_n \:\: \le \:\: C \:\: \le \:\: h_{\max} \cdot H_n
\end{displaymath}
\noindent where $H_k = \ln(k) + \Theta(1)$ is the k-th harmonic number.
We now prove the second statement of theorem \ref{thm:matthew}.
\begin{lem}
Let $\cH=(V,\cE)$ be a hyper-graph, then: $$\RC=O(\rh_{\max} \log n)$$ where $\card{V}=n$
\end{lem}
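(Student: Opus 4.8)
The plan is to mimic Matthews' original random-permutation argument, but with ``$u$ is visited'' replaced throughout by ``$u$ hears the walk'' (i.e.\ some edge of $\cE(u)$ is traversed) and the hitting time $h$ replaced by the radio hitting time $\rh$. Fix a starting vertex $v$ and let $\sigma$ be a uniformly random permutation of $V$, drawn independently of the walk. For $1\le k\le n$ define $\tau_k$ to be the first time at which each of $\sigma(1),\dots,\sigma(k)$ has heard the walk. These times are nondecreasing, and since $\{\sigma(1),\dots,\sigma(n)\}=V$ we have $\tau_n=\RC_v$ for every realization of $\sigma$. Writing $\tau_0=0$, it then suffices to bound $E[\RC_v]=\sum_{k=1}^{n} E[\tau_k-\tau_{k-1}]$.

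First I would identify when an increment is nonzero. Let $A_k$ be the event that $\sigma(k)$ is the last of $\sigma(1),\dots,\sigma(k)$ to hear the walk; on $A_k^c$ we have $\tau_k=\tau_{k-1}$. The key observation is that $A_k$ coincides with the event that $\sigma(k)$ has not yet heard the walk at time $\tau_{k-1}$, so $A_k$ is measurable with respect to the history up to the stopping time $\tau_{k-1}$ together with $\sigma$. Because $\sigma$ is independent of the walk, conditioned on the unordered set $\{\sigma(1),\dots,\sigma(k)\}$ its last label $\sigma(k)$ is uniform over that set, whereas which vertex of the set hears last is a function of the walk alone; hence $\Pr[A_k]=1/k$.

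Next I would bound the size of an increment on $A_k$ using the strong Markov property at $\tau_{k-1}$: conditioned on $X(\tau_{k-1})=x$ and $\sigma(k)=w$, the remaining time $\tau_k-\tau_{k-1}$ is exactly the radio hitting time of $w$ for a fresh walk started at $x$, whose expectation is $\rh(x,w)\le\rh_{\max}$. Summing, $E[\tau_k-\tau_{k-1}]=E[(\tau_k-\tau_{k-1})\mathbb{I}_{A_k}]\le \rh_{\max}\,\Pr[A_k]=\rh_{\max}/k$, and therefore $E[\RC_v]\le \rh_{\max}\sum_{k=1}^{n}\tfrac1k=\rh_{\max}H_n=O(\rh_{\max}\log n)$. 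Maximizing over $v$ yields $\RC=O(\rh_{\max}\log n)$.

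The step I expect to be most delicate is the last one, because conditioning on $A_k$ (that $w$ is heard last) is correlated with the increment $\tau_k-\tau_{k-1}$ being large, so one cannot naively assert $E[\tau_k-\tau_{k-1}\mid A_k]\le\rh_{\max}$. The resolution is precisely the measurability remark above: since $A_k\in\mathcal{F}_{\tau_{k-1}}\vee\sigma$, it is independent of the post-$\tau_{k-1}$ walk given $X(\tau_{k-1})$, so the strong Markov identity $E[\tau_k-\tau_{k-1}\mid \mathcal{F}_{\tau_{k-1}},\sigma]=\mathbb{I}_{A_k}\,\rh(X(\tau_{k-1}),\sigma(k))$ holds and may be integrated directly. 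One should also verify the minor points that all radio hitting times are finite (so $\tau_n<\infty$ and the ``last to hear'' vertex is well defined) and that restarting the vertex walk at $X(\tau_{k-1})$ correctly reproduces the radio hitting time of the not-yet-heard target, both of which follow from irreducibility of the walk and the definition of $\rh$.
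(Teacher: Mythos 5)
Your proof is correct, but it takes a genuinely different route from the paper's. You run Matthews' original random-permutation argument: draw a uniform permutation $\sigma$ independent of the walk, decompose $\RC_v=\sum_k(\tau_k-\tau_{k-1})$, show the increment is nonzero only on the event $A_k$ that $\sigma(k)$ is unheard at time $\tau_{k-1}$ (probability $\le 1/k$ by independence of $\sigma$ from the walk), and bound the increment on $A_k$ by $\rh_{\max}$ via the strong Markov property, yielding $\RC\le\rh_{\max}H_n$. The paper instead follows the generalization of Matthews' theorem to parallel random walks of Alon \emph{et al.}: it applies Markov's inequality to a segment of length $\e\cdot\rh_{\max}$, treats a walk of length $\e\cdot i\cdot\rh_{\max}$ with $i=\lceil 2\ln n\rceil$ as $i$ independent trials to push the per-vertex failure probability below $1/n^2$, union-bounds over all vertices, and charges the residual failure event (probability $\le 1/n$) using Theorem~\ref{thm:rcspeedup}. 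Your route buys the sharp constant $\rh_{\max}H_n$ and is self-contained; in particular it never invokes Theorem~\ref{thm:rcspeedup}, whose role in the paper's failure term $(n\log n)\cdot\rh_{\max}\cdot\frac{1}{n}$ is the most delicate step of their argument. The paper's route is looser in constants but more robust: the amplification-plus-union-bound scheme extends directly to many parallel walks, which is exactly why that reference is cited. One small correction to your write-up: in a hyper-graph a single edge traversal can be heard by several vertices simultaneously, so the first-hearing times of $\sigma(1),\dots,\sigma(k)$ can tie; $A_k$ then requires $\sigma(k)$ to be the \emph{unique} last hearer, and you only get $\Pr[A_k]\le 1/k$ rather than equality. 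Since you need an upper bound this is harmless, but the equality claim as stated does not hold in general.
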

\begin{proof}
The proof technique is essentially identical to a generalization of Matthews' theorem to parallel random walks presented
in~\cite{alon08many}.
Recall, for any two vertices~$u,v$ in~$\cH$, $\rh(u',v')\le \rh_{\max}$.
By Markov inequality, $\Pr[$a random walk of length $\e \cdot \rh_{\max}$ starting from $u$ does not radio hit $v ] \le 1/\e$.
Hence for any integer $i>1$, the probability that a random walk of length $\e \cdot i \cdot \rh_{\max}$
does not radio hit~$v$ is at most~$1/\e^i$. (We can view the walk as~$i$ independent trials
to radio hit~$v$.)  Set $i=\lceil (2\ln n) \rceil$. Then the
probability that a random-walk of length $\e \cdot i \cdot \rh_{\max}$ does not radio hit~$v$
is at most $1/(n^{2})$. Thus with probability at least $1-(1/n)$ a random-walk
radio hits all vertices of~$\cH$ starting from~$u$. Together with theorem \ref{thm:rcspeedup},
we can bound the radio cover time of~$\cH$ by
\begin{align*}
\RC &\le \e \cdot i \cdot \rh_{\max} (1-\frac{1}{n}) + (n\log n) \cdot \rh_{\max}\frac{1}{n} \\
&\le O(\rh_{\max} \log n)
\end{align*}
The theorem follows. \qed
\end{proof}
Our second general bound for the cover time of hyper-graphs is a generalization of the fundamental result of Aleliunas \emph{et al.}
\cite{aleliunas79random} ,which bounds the cover time of a simple graph by $O(mn)$.
\begin{thm}\label{thm:mnr}
Let $\cH=(V,\cE)$ be a hyper-graph, then $$\RC \le C =O(m\cdot n \cdot r(\cH))$$ where $\card{V}=n$, $\card{\cE}=m$ and $r(\cH)$ is the rank of $\cH$.
\end{thm}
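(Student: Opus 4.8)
The plan is to dispose of the trivial inequality $\RC \le C$ directly from the definitions (a vertex $u$ is radio-hit no later than it is visited, so $\RT_{v,X}^{u}\le T_{v,X}^{u}$ pointwise) and then to spend all the effort on the bound $C=O(mnr)$. For the latter I would follow the electrical-network proof of the Aleliunas \emph{et al.} bound, adapted to the hypergraph walk. The chain $P=D_{v}^{-1}WD_{e}^{-1}W^{T}$ is reversible with respect to $\pi(v)=d(v)/\Vol(V)$, so I would view it as a random walk (with self-loops) on a weighted graph on the vertex set $V$, assigning to each pair $\{u,v\}$ the symmetric conductance $c(u,v)=\sum_{e:\,\{u,v\}\subseteq e}1/\delta(e)$. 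A short computation gives $c(v):=\sum_{u}c(u,v)=d(v)$, hence the total conductance is $c_{G}=\sum_{v}d(v)=\Vol(V)=\sum_{e}\delta(e)\le m\cdot r(\cH)$.

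Next I would invoke the commute-time identity for reversible chains \cite{chandra89electrical}: for any $u,v$, the commute time satisfies $h(u,v)+h(v,u)=c_{G}\cdot R_{\mathrm{eff}}(u,v)$, where $R_{\mathrm{eff}}$ is the effective resistance in the network whose edge resistances are $1/c(u,v)$ (self-loops inflate $c_{G}$ and $c(v)$ consistently but leave $R_{\mathrm{eff}}$ untouched). The crucial step, and the one that makes the bound $mnr$ rather than $mnr^{2}$, is a good estimate of $R_{\mathrm{eff}}(u,v)$ for two vertices lying in a common hyperedge $e$. The naive bound $R_{\mathrm{eff}}(u,v)\le 1/c(u,v)\le\delta(e)\le r$ loses a factor of $r$. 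Instead I would restrict the network to the single hyperedge $e$ (of size $\ell=\delta(e)$): this is a complete graph $K_{\ell}$ in which every edge has conductance $1/\ell$, and since the effective resistance between two vertices of $K_{\ell}$ with unit conductances is $2/\ell$, rescaling the resistances by $\ell$ makes it exactly $2$. By Rayleigh monotonicity the remaining hyperedges only add parallel conductance, so $R_{\mathrm{eff}}(u,v)\le 2$ whenever $u,v$ share a hyperedge. Consequently the commute time of such a pair is at most $2c_{G}\le 2mr$.

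Finally I would run the standard spanning-tree/Euler-tour argument of \cite{aleliunas79random}. Assuming $\cH$ connected, the graph on $V$ in which $u\sim v$ iff $u,v$ share a hyperedge is connected; I fix a spanning tree $T$ of it and a closed walk traversing each of its $n-1$ edges exactly twice. The cover time from any start vertex is bounded by the total hitting time along this tour, which telescopes into the sum of the commute times of the tree edges, giving $C\le (n-1)\cdot 2mr=O(mnr)$, and hence $\RC\le C=O(m\cdot n\cdot r(\cH))$.

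I would flag the effective-resistance estimate as the main obstacle: the rest is the textbook Aleliunas argument, but the entire content of the theorem is that the rank enters only as $r$, not $r^{2}$, and this is exactly what the complete-graph computation buys, namely the abundance of parallel two-step paths inside a hyperedge (equivalently, the length-$2$ paths through an edge-vertex in the bipartite graph $\cB(\cH)$). The only secondary care needed is the bookkeeping with self-loops in the commute-time identity, handled as noted above.
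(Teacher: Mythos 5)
Your proof is correct, and it reaches the same bound $C\le 2mnr$ by a genuinely different route than the paper. The paper never works on the vertex set with a weighted network: it lifts the walk to the \emph{unweighted} bipartite incidence graph $\cB(\cH)$ (where the hypergraph walk is two steps of a simple walk, and the number of edges is $m'\le m\cdot r(\cH)$), and its key trick is a Steiner-tree argument there -- it takes a minimal tree $T$ spanning only the vertex side of $\cB(\cH)$, observes that every edge-node of $T$ must then have degree at least two, so $T$ has fewer than $2n$ edges, and sums the commute times $C_{uv}=2m'R_{uv}\le 2m'$ over the tree edges to get $4mnr$, halved back to hypergraph steps. This Steiner step is what keeps the tree at $O(n)$ edges; a full spanning tree of $\cB(\cH)$ has $n+m-1$ edges and would only give $O((n+m)mr)$, which is useless when $m\gg n$. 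Your substitute for that trick is the pair consisting of the weighted commute-time identity with $c_G=\Vol(V)\le mr$ and the estimate $R_{\mathrm{eff}}(u,v)\le 2$ for vertices sharing a hyperedge (the $K_{\ell}$ computation plus Rayleigh monotonicity), after which an ordinary spanning tree of the $2$-section suffices. As you yourself note, your resistance bound and the paper's are really the same fact in two guises -- a path of length $2$ through the edge-node in $\cB(\cH)$ has resistance $2$ -- but your packaging avoids the bipartite lift and the factor-of-two step-counting entirely, at the price of having to verify that the hypergraph chain is the weighted walk with $c(u,v)=\sum_{e\supseteq\{u,v\}}1/\delta(e)$, $c(v)=d(v)$, and that self-loops are benign in the commute-time identity, all of which you do correctly.
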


\begin{proof}
We use the bi-partite graph $B(\cH)=(V\cup \cE,E_B)$. We call the nodes of $B(\cH)$ that correspond to the nodes of the hyper-graph the left part of $B(\cH)$, and a nodes of $B(\cH)$ that corresponding to the hyper-edges of the hyper-graph the right part of $B(\cH)$. Denote the number of nodes in $B(\cH)$ by $n'$ and the number of edges in $B(\cH)$ by $m'$.
Since a node in $B(\cH)$ is node in the the hyper-graph $\cH$ or an hyper-edge in $\cH$, it follows that the total number of nodes in $B(\cH)$ is $n'=n+m$. We can bound the number of edges in $B(\cH)$ by $m' \leq m \cdot r(\cH)$; this follows since each hyper-edge is replaced by no more than $r(\cH)$ edges in $B(\cH)$. Clearly, the graph $B(\cH)$ is connected if and only if the hyper-graph $\cH$ is connected.

Let $T$ be a minimum (with respect to the number of nodes in $T$) tree that contains the entire left part of $B(\cH)$. Observe that tree $T$ exists since $B(\cH)$ is connected and finite. Since all nodes in the right part of $B(\cH)$ have a degree larger or equal to two it follows that the number of edges in $T$ is less than $2n$.

We use equation~\ref{eq:Ch1} to calculate the sum of commute time on the tree $T$.
Since the number of edges in $T$ is less than $2n$ it follows that $T$ will have no more than $n-1$ nodes that are belonging to the right part. Those nodes are crossposting to an hyper-edges on the hypergraph $\cH$.
therefore the sum of commute times on the tree is $2m' \cdot 2n \leq 4mn\cdot r(\cH)$. Note that every step on the hyper-graph is two steps on the bi-partite graph, and therefore the cover time of
the hyper-graph is no more than $2mn \cdot r(\cH)$.
\qed
\end{proof}
The above result is tight in the sense that we can show the following (proof in the appendix):
\begin{lemma}\label{lem:lowermnr}
For $c\in \mathbb{N}, c \le n/2$ there exist a hyper-graph $\cH=(V,\mathcal{E}),|V|=n,|\mathcal{E}|=m$, $r(\cH)=c$ with
$C \ge \RC \ge \rh = \Omega(mn\cdot c)$.
\end{lemma}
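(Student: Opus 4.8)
The plan is to realise the matching lower bound with a \emph{hyper-lollipop}: a dense ``hyper-clique'' that inflates the number of incidences by a factor $c$, glued to a long path that creates effective resistance $\Theta(n)$. Concretely, set $s=\ell=n/2$, let $S=\{q_1,\dots,q_s\}$ carry \emph{all} $c$-subsets as hyper-edges (so there are $\binom{s}{c}$ of them, each of size $c$), and attach a path $q_1,p_1,p_2,\dots,p_\ell$ using the ordinary $2$-element edges $\{q_1,p_1\},\{p_1,p_2\},\dots,\{p_{\ell-1},p_\ell\}$. Then $\card{V}=n$, $r(\cH)=c$, and $m=\binom{n/2}{c}+n/2$; moreover the number of incidences $m'=\sum_{e}\delta(e)=c\binom{s}{c}+2\ell=\Theta(mc)$. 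Any sufficiently dense, vertex-transitive sub-family of the $c$-subsets would serve equally well; taking all of them, as anticipated in the introduction, keeps the bookkeeping trivial because $m$ is then super-polynomial.

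Next I pass to the bipartite graph $B(\cH)$ and use the fact, recorded earlier, that the simple walk on $\cH$ is exactly the simple walk on $B(\cH)$ observed at even times. Hearing $p_\ell$ is the event $e(t)\in\cE(p_\ell)$, and since $p_\ell$ lies in the single hyper-edge $f=\{p_{\ell-1},p_\ell\}$, radio-hitting $p_\ell$ from a clique vertex $v=q_2$ equals, up to the factor $2$, the hitting time $H_B(v\to f)$ of the edge-node $f$ in $B(\cH)$. I then invoke the commute-time identity $H_B(v\to f)+H_B(f\to v)=2m'\,R_B(v,f)$, where $R_B$ is effective resistance and $m'$ is the number of edges of $B(\cH)$. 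The resistance is dominated by the path: $R_B(v,f)=R_B(v,q_1)+\Theta(\ell)=O(1)+\Theta(n)=\Theta(n)$, the $O(1)$ coming from the enormous degrees inside the hyper-clique (there are $\binom{s-2}{c-2}$ parallel length-$2$ connections between $q_2$ and $q_1$). Hence the commute time is $2m'R_B(v,f)=\Theta(mc\cdot n)=\Theta(mnc)$, which already matches the target up to the one-sided/two-sided gap.

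The main step, and the only real obstacle, is to show that the \emph{one-directional} hitting time $H_B(v\to f)$ is itself $\Theta(mnc)$, i.e.\ that the return direction $H_B(f\to v)$ is negligible. Starting at $f$, the walk must first climb the path back to $q_1$, which is a one-dimensional walk on $\ell$ sites and costs $O(\ell^2)=O(n^2)$ steps; once inside the symmetric, highly-connected hyper-clique it hits the specific vertex $v$ in polynomially many further steps (by vertex-transitivity the in-clique hitting time is $O(n)$, and the rare excursions back onto the path add only lower-order terms). Thus $H_B(f\to v)=O(n^2)$, which is $o(mnc)$ because $mc\ge 2\binom{n/2}{2}=\Theta(n^2)$ forces $mnc\gg n^2$. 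Subtracting, $H_B(v\to f)=2m'R_B(v,f)-H_B(f\to v)=\Theta(mnc)-O(n^2)=\Omega(mnc)$, so $\rh_{v}^{p_\ell}=\Omega(mnc)$.

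Finally I assemble the displayed chain. Since $\rh_{\max}\ge \rh_v^{p_\ell}$, and by definition $\RC\ge E[\RT_v^u]=\rh_v^u$ for every $u$ while $C\ge\RC$ (visiting a node implies having heard it on the edge used to arrive), we obtain $C\ge\RC\ge\rh=\Omega(mnc)$, as required. I expect the delicate point to be the clean bound on $H_B(f\to v)$ — in particular arguing that wandering onto the path on the way from $q_1$ to $v$ does not inflate the return time — but the (super-)polynomial size of $m$ provides such a comfortable gap that only a crude polynomial estimate of the return direction is actually needed.
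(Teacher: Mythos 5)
Your proposal takes essentially the same route as the paper: the paper also builds a lollipop --- a $c$-uniform hyper-clique glued at a single vertex to a $2$-uniform hyper-line --- passes to the bipartite graph $B(\cH)$, and extracts the $\Omega(mnc)$ bound from the commute time between the clique and the far end of the line; your subtraction step (commute time $2m'R_B(v,f)=\Theta(m'n)$ minus the return time $H_B(f\to v)=O(n^2)$, negligible because $m'n=\Omega(n^3)$) supplies precisely the details that the paper compresses into a one-line appeal to a dangling reference, and your reduction of ``hearing $p_\ell$'' to hitting the edge-node $f$ in $B(\cH)$ is sound since $p_\ell$ lies in exactly one hyper-edge.

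There is, however, one concrete flaw at the boundary value $c=n/2$, which the statement allows. With your split $s=\ell=n/2$, the ``clique'' consisting of all $c$-subsets of an $s$-set degenerates to a \emph{single} hyper-edge when $c=s$: then $m'=c\binom{s}{c}+2\ell=\Theta(n)$, which is \emph{not} $\Theta(mc)=\Theta(n^2)$, the commute time $2m'R_B(v,f)$ is only $\Theta(n^2)$, and in fact $\rh_{\max}=O(n^2)\ll mnc=\Theta(n^3)$ for that hyper-graph, so your construction does not prove the lemma at $c=n/2$ (your claim $m'=\Theta(mc)$ silently uses $\binom{s}{c}\ge\ell$, which holds for $2\le c\le s-1$ but fails at $c=s$). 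The repair is routine, and is implicitly what the paper does by taking the clique on $n'=(n+1)/2>c$ vertices: whenever $c$ is within $O(1)$ of $n/2$, place the clique on $\lceil 3n/4\rceil$ vertices and the path on the remaining ones, so that $\binom{s}{c}$ is again superpolynomial, $m'=\Theta(mc)$, and the rest of your argument (resistance $\Theta(n)$ across the path, $O(n^2)$ return time via the cut vertex $q_1$, subtraction, and the chain $C\ge\RC\ge\rh$) goes through verbatim. A last cosmetic point: your in-clique hitting-time claim of $O(n)$ ``by vertex-transitivity'' is not justified as stated, but, as you yourself note, any crude polynomial bound suffices --- e.g.\ the commute-time bound $2m'R_B(q_1,q_2)\le 4m'/\binom{s-2}{c-2}=O(n^2)$ already does the job.
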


\section{The Radio Cover Time of Radio Hyper-Graphs in 1 and 2 Dimensions}\label{sec:unit}
Next we extend the notion of a line into a $k$-uniform hyperline, $\cH_1^k$, i.e., all the hyper edges have the same cardinality $c$,  the vertex set is $\mathcal{V}=\{1,2,...,n\}$, and the hyper-edges are $\cE=\{[1,k],[2,k+1],...[n-k,n]\}$, where $[a,b]=\{a,a+1,...,b\}$. In this case, the radio cover time is equal to the maximum radio hitting time. The next theorem computed and upper bound on the hitting time.

\begin{thm}\label{thm:1D}
For $2 \le k \le n$, Let $\cH_1^k$ be a $k$-uniform, $1$-dimensional mesh radio hyper-graph, then
$$\RC \le \frac{n^2}{\frac{1}{3} k^2+ \frac{1}{2} k+ \frac{1}{6}} =O\left ( \frac{n^2}{k^2}\right )$$
\end{thm}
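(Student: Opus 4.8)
The plan is to use the reduction stated just before the theorem: for the hyperline $\cH_1^k$ the radio cover time equals the maximum radio hitting time, so it suffices to bound $\rh_{\max}=\max_{v,u}\rh_v^u$. The first observation is that the hardest targets are the two extreme vertices $1$ and $n$. An endpoint vertex belongs to a \emph{single} hyper-edge ($1\in e_1=[1,k]$ and, writing $m=n-k+1$, $n\in e_m=[n-k+1,n]$), whereas every interior vertex lies in $k$ edges and can be ``heard'' from any of the $2k-1$ nearby positions. Hence the worst case is a walk started near one end that must cause the opposite end's unique edge to be traversed, and I would reduce ``the walk is heard by vertex $1$'' to the event that the position process $X(t)$ enters the left block $\{1,\dots,k\}$ and then selects $e_1$ (a step of probability at least $1/k$ once in range). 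Thus, up to a lower-order additive term, $\rh_{\max}$ is controlled by the expected time for $X(t)$ to travel from one end of the line to the other.

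Next I would analyze $X(t)$ as a one-dimensional random walk. For an interior vertex $i$ (with $k\le i\le n-k+1$) the walk picks one of its $k$ incident edges uniformly and then lands on a uniform vertex of that edge; by the reflection $x\mapsto 2i-x$ the resulting increment is symmetric, so $E[X(t+1)-i\mid X(t)=i]=0$, and the expected squared increment $c_k:=E[(X(t+1)-X(t))^2\mid X(t)=i]$ is independent of $i$. Carrying out this two-stage uniform average (over the choice of incident edge, then over the landing vertex) is a short summation of consecutive squares, which I expect to evaluate to $c_k=\tfrac13 k^2+\tfrac12 k+\tfrac16$, the denominator in the statement. Near the two ends the incident-edge set is truncated, so there the increment acquires a drift pointing into the interior (which can only speed up first passage toward the far endpoint) while its second moment remains $\Theta(k^2)$.

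With the drift and second-moment estimates in hand, I would bound the first-passage time by a quadratic-potential (optional-stopping) argument. Taking a potential of the form $\phi(x)=\alpha x^2+\beta x+\gamma$ and using $E[\phi(X(t+1))-\phi(X(t))\mid X(t)=i]=\alpha c_k$ in the interior (the increment being mean-zero), one tunes the constants so that $\phi(X(t))+t$ is a supermartingale up to the hitting time $\tau$ of the target block, with $\phi$ normalized to vanish on that block and to be at most $n^2/c_k$ at the start. Optional stopping then gives $E[\tau]\le n^2/c_k$; adding the $O(k)$ contribution of the final edge-selection step and invoking $\RC=\rh_{\max}$ yields $\RC\le n^2/c_k=O(n^2/k^2)$.

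The main obstacle is the boundary analysis. In the blocks $\{1,\dots,k\}$ and $\{n-k+1,\dots,n\}$ the vertex degrees drop (vertex $j$ lies in only $j$ edges for $j\le k$), so both the mean-zero property and the exact value of $c_k$ fail, and one must verify that the supermartingale inequality for $\phi$ still holds there: intuitively the inward drift helps, but this must be checked quantitatively so that the clean constant in the denominator survives. A secondary technical point is to make the reduction from ``radio-hitting the endpoint'' to ``$X(t)$ entering the endpoint block'' rigorous, i.e.\ to bound the extra expected time actually to traverse $e_1$ once the walk is in range, and to justify $\RC=\rh_{\max}$ for this family if it is not simply quoted.
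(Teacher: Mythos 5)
Your plan is, at its core, the same argument the paper gives. The paper's proof works on the \emph{infinite} hyperline, observes that $X_t$ is a martingale, that $M_t = X_t^2-\left(\frac13k^2+\frac12k+\frac16\right)t$ is also a martingale, and applies optional stopping to the exit time $T$ of $[-a,a]$ to get $\frac{a^2}{c_k}\le E[T]\le\frac{(a+k)^2}{c_k}$; your quadratic potential $\phi$ with $\phi(X_t)+t$ a supermartingale is exactly this computation. The two obstacles you flag as the hard part --- the boundary blocks where degrees drop, and the reduction from ``vertex $1$ is heard'' to ``the walk enters the endpoint block and the right edge fires'' --- are precisely the points the published proof sidesteps entirely by passing to the infinite line and never returning to the finite hyper-graph, so on those points your plan is more careful than the paper's proof, not less.

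There is, however, a genuine discrepancy in your step-size computation. For the walk you describe (at an interior vertex choose one of its $k$ incident edges uniformly, then a uniform vertex of that edge), the increment law is $\Pr[\Delta=d]=\frac{k-|d|}{k^2}$ for $|d|\le k-1$, whose second moment is $\frac{k^2-1}{6}$, \emph{not} $\frac13k^2+\frac12k+\frac16$. The theorem's denominator equals $\frac1k\sum_{j=1}^{k}j^2=\frac{(k+1)(2k+1)}{6}$, i.e.\ the second moment of an increment uniform on $\{\pm1,\dots,\pm k\}$ --- the walk in which each node has a single radio edge reaching $k$ positions in each direction, which is what the paper's martingale implicitly analyzes (the mismatch between the stated $k$-uniform hyperline and the walk actually analyzed is an inconsistency internal to the paper). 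With your two-stage model you would still conclude $\RC=O(n^2/k^2)$, since $\frac{k^2-1}{6}=\Theta(k^2)$, but your summation cannot ``evaluate to'' the claimed denominator; the resulting bound $n^2/\frac{k^2-1}{6}$ is weaker than the stated one. To recover the theorem exactly as written you must switch to the $\{\pm1,\dots,\pm k\}$ step model; otherwise you should claim only the asymptotic form of the bound.
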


\begin{proof}
The proof is based on martingale. We remind the reader that a Markov chain $(M_n)_{n\geq 0}$ is a martingale if:
$$E[M_{n+1}|M_{n}]=M_{n}.$$
Consider our Markov random walk $(X_t)_{t\geq 0}$ on the nodes of the infinite $k$-hyperline. 
We claim that $X_t$ is martingale. Observe
that $$M_t=X^{2}_{t}-\left(\frac{1}{3} k^2+ \frac{1}{2} k+ \frac{1}{6} \right) t$$ is also martingale.
Let $T=\inf\{t\geq 0:X_t\geq a \bigvee X_t \leq -a \in \mathbb{N}\} $. By the definition of $T$ it is clear that $T$ is a random stopping time and
that for all $a<\infty$, $T<\infty $. Therefore it follows form the standard theorem on martingale that $E[M_T]=E[M_0]=0$. Moreover,
$$0=E[M_0]=E[M_T]=E[X_t^2]-E\left[\left(\frac{1}{3} k^2+ \frac{1}{2} k+ \frac{1}{6} \right) T\right].$$
Using a simple algebra manipulation it follows that
$$\frac{a^2}{\frac{1}{3} k^2+ \frac{1}{2} k+ \frac{1}{6}} \leq E[T] \leq \frac{(a+k)^2}{\frac{1}{3} k^2+ \frac{1}{2} k+ \frac{1}{6}}.$$
\qed
\end{proof}
For the 2-dimensional case we prove that the radio cover time decreases with the size of the edges:
\begin{thm}\label{thm:2D}
For $4 \le k \le n$, Let $\cH_2^k$ be a $k$-uniform, $2$-dimensional mesh radio hyper-graph, then:
$$\RC=O\left( \frac{n}{k}\log\frac{n}{k}\log n \right)$$
\end{thm}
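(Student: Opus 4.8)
The plan is to derive the bound from the radio version of Matthews' inequality (Theorem \ref{thm:matthew}), which gives $\RC = O(\rh_{\max}\log n)$, and then to establish separately that the maximum radio hitting time satisfies $\rh_{\max} = O(\tfrac{n}{k}\log\tfrac{n}{k})$. Combining the two yields $\RC = O(\tfrac{n}{k}\log\tfrac{n}{k}\log n)$ at once, so essentially all of the work lies in bounding $\rh_{\max}$.

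To bound $\rh_{\max}$, fix a source $v$ and a target $u$ in the $\sqrt n\times\sqrt n$ mesh. In the radio hyper-graph $\cR$ of the mesh, the node $u$ belongs to exactly those hyper-arcs whose origin lies in the closed radio-neighborhood of $u$, so $u$ first hears the walk precisely when $X(t)$ enters the block of $\Theta(k)$ grid-nodes (a square of side $\Theta(\sqrt k)$) surrounding $u$; hence $\rh(v,u)$ equals the hitting time of that block under the hyper-walk. First I would record the two structural facts that drive the estimate: (i) a single hyper-step sends the walker to a uniformly chosen node within radio range of a transmitter near its current position, so the per-coordinate displacement has variance $\Theta(k)$, and the walk is, up to a constant-factor laziness, a nearest-neighbor walk on the coarse grid obtained by tiling the mesh into $N=n/k$ blocks of side $\sqrt k$; (ii) the target block contains a whole coarse cell, so radio-hitting $u$ is implied by the coarse walk reaching $u$'s cell.

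With these in hand, I would compare the block-projection of the hyper-walk to a simple random walk on the $\sqrt N\times\sqrt N$ coarse grid and invoke the classical fact that the maximal hitting time of the two-dimensional grid (or torus) on $N$ vertices is $\Theta(N\log N)$. This gives $\rh(v,u)=O(N\log N)=O(\tfrac{n}{k}\log\tfrac{n}{k})$, uniformly in $v$ and $u$, and hence the claimed bound on $\rh_{\max}$. This is exactly where the factor-$k$ speedup over the ordinary mesh is won: because each hyper-step advances the walker by $\Theta(\sqrt k)$ rather than by a single lattice step, the effective number of sites the walk must explore is $n/k$ rather than $n$.

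The hard part will be making the coarse-graining rigorous: the projection of the hyper-walk onto block indices is not itself Markovian (the within-block position influences the transition law), and the boundary blocks of the finite mesh behave slightly differently from interior ones. I would resolve this by coupling the hyper-walk with a lazy simple random walk on the block grid—showing that each block transition is realized within $O(1)$ hyper-steps in expectation and that the coarse increments dominate those of the lazy walk—or, alternatively, by a Dirichlet-form comparison bounding the hyper-walk's hitting times by those of a nearest-neighbor chain on blocks. Handling the boundary (and, if convenient, reducing to the torus, which changes hitting times only by a constant factor) is routine but must be verified. Once the coarse walk is pinned down, the two classical inputs—the $\Theta(N\log N)$ grid hitting time and Matthews' bound—combine to give the stated $O\!\left(\tfrac{n}{k}\log\tfrac{n}{k}\log n\right)$.
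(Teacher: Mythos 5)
Your top-level skeleton coincides with the paper's: reduce the theorem to a bound $\rh_{\max}=O\left(\frac{n}{k}\log\frac{n}{k}\right)$ and multiply by $\log n$ via Theorem~\ref{thm:matthew}, after observing that node $u$ "hears" the walk exactly when the walk enters $u$'s radio neighborhood. The paper does both of these things (Lemma~\ref{lem:hitting2D} plus Theorem~\ref{thm:matthew}), and your coarse-graining heuristic even produces the right scaling, since $N=n/k$ cells with the classical $\Theta(N\log N)$ torus hitting time matches the paper's $\rh_{\max}$ bound.

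The gap is that the comparison lemma you defer --- relating the hyper-walk to a simple random walk on the $\sqrt{N}\times\sqrt{N}$ coarse grid --- \emph{is} the entire content of the theorem, and the first repair you propose would not work. The block projection of the hyper-walk is not Markovian, and a coupling in which "the coarse increments dominate those of the lazy walk" does not transfer hitting-time bounds: in two dimensions, hitting times of a small cell are governed by Green's function / effective-resistance asymptotics, not by increment size, and a process that moves more can take longer to enter a fixed cell (overshoot). So increment domination is the wrong invariant; some genuinely different justification is needed. Your second repair, a Dirichlet-form (resistance) comparison, is sound in principle because the hyper-walk is a simple random walk on the undirected graph $G_{\cH_{2,k}}$ and hence reversible --- but executing it means building flows and controlling volumes in the fine network, which is exactly what the paper does directly, bypassing the coarse chain altogether: by vertex-transitivity it writes $\rh(u,v)=h(u,v)-h(w,v)=m(R_{uv}-R_{wv})$ (Eqs.~\ref{eq:rh1}--\ref{eq:rh2}, via Eq.~\ref{eq:Ch1}), upper-bounds $R_{uv}\le \frac{2}{d}+O\!\left(\frac{\log d(u,v)}{d^{2}}\right)$ by an explicit unit-flow construction (Lemma~\ref{lem:ruv}, Thomson principle), and lower-bounds $R_{wv}\ge\frac{2}{d+1}$ by shorting, so that the leading $\frac{2}{d}$ terms cancel and only the long-range $O(\log/d^{2})$ part survives. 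That cancellation is the crux any rigorous version of your argument must reproduce: hitting a single vertex costs $\Theta(n)$ (it is dominated by the local $2/d$ resistance term times $m$), and the radio hitting time is small only because it isolates the logarithmic long-range term; your sketch never pins this down. Two minor points: the paper's $\cH_{2,k}$ lives on the torus, so your boundary concern is moot, and an $L_1$ ball of radius $\Theta(\sqrt{k})$ need not contain a full tiling cell, so the cells should have side a constant fraction of the radio range (a harmless constant-factor adjustment).
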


This result is significant for the design of many algorithms in wireless networks. For example, it implies that the time to reach all nodes is less
than
linear when the size of the edges is $O(\log^2 n)$. We believe that the bound is not tight and further improvement is feasible. The proof is using
the strong symmetry of the graph and using the following lemmas that are valuable in their own right.

\subsection{The Radio Cover Time of 2-dimensional Mesh Radio Hyper-graphs}
\begin{dfn}
A \textbf{$k$-hop, 2-Dimensional Mesh Radio Hyper-Graphs} denoted as $\cH_{2,k}$ is a $\sqrt{n} \times \sqrt{n}$
2-dimensional grid of nodes located at the torus, where each node $v$ is connected via a directed hyper-edge to all the nodes that are at most at
$L_{1}$-distance $k$ away from it.
\end{dfn}

We will bound the radio cover time of $\cH_{2,k}$ and this immediately implies the result of Theorem \ref{thm:2D} since $\cH_{2,k}$ is a 2k(k+1)-uniform $2$-dimensional mesh radio hyper-graph. To bound the radio cover time we will first bound
the maximum radio hitting time of $\cH_{2,k}$, and then use Theorem \ref{thm:matthew} to obtain the desired result.
For a hyper-graph $\cH(V,\cE)$, let $G_{\cH}(V,E)$ be the graph for the simple random walk $X(t)$ on $\cH$ with the transition probability $P$.
We will use $G_{\cH_{2,k}}$, or $G$ for short, to prove our results.
Note that $G$ is undirected since if $u$ belongs to the edge of $v$ in $\cH_{2,k}$ then $v$ belongs to the edge of $u$ as well; moreover, since every
node has only one hyper-edge, the random walk on $G$ is a simple random walk. This implies that the electrical network of $G$ consists of 1 ohm resistors.
We will use the strong symmetry of $G$, in particular the facts that $G$ is $d$-regular and vertex-transitive\footnote{a graph $G$ is vertex-transitive
if its automorphism group acts transitively upon its vertices, i.e, for every two vertices $u,v$ there is a automorphism  $f$ s.t. $f(u)=v$}.
\begin{lem}
For a transitive (hyper) graph, the hitting time can be expressed as:
$$
h(u,v) = h(u, N(v)) + h(w,v)
$$
where $N(v)$ is the set of neighbors of $v$ and $w$ is any neighbor of $v$.
\end{lem}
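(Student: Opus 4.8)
The plan is to prove the identity by a first-passage decomposition of the walk at the neighbor set $N(v)$, combining the strong Markov property with the symmetry of $\cH$; throughout I take $u\neq v$ (the case $u=v$ being excluded, as the identity is intended for distinct endpoints). The starting point is a purely combinatorial observation about the chain $X(t)$: the only transitions that can land on $v$ originate at a vertex of $N(v)$, so at the step immediately preceding $T_{u,X}^{\{v\}}$ the walk occupies some neighbor of $v$. Hence the walk must enter $N(v)$ no later than that step, which shows $T_{u,X}^{N(v)}\le T_{u,X}^{\{v\}}$ along every trajectory, with equality exactly when $u\in N(v)$, in which case $T_{u,X}^{N(v)}=0$.

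I would then split the walk at the stopping time $\tau:=T_{u,X}^{N(v)}$ and let $W:=X(\tau)$ be the random neighbor of $v$ through which the walk first meets $N(v)$. By the strong Markov property, conditioned on $W=w$ the expected number of further steps to reach $v$ is precisely $h(w,v)$, independent of the trajectory up to time $\tau$. Writing $T_{u,X}^{\{v\}}=\tau+\bigl(T_{u,X}^{\{v\}}-\tau\bigr)$ and taking expectations yields the exact, hypothesis-free identity
\[
h(u,v)=h(u,N(v))+\sum_{w\in N(v)}\Pr[W=w]\,h(w,v)=h(u,N(v))+E\bigl[h(W,v)\bigr].
\]
The only point to check here is that $\tau$ is almost surely finite, which holds because $v$, and therefore $N(v)$, is reachable from $u$ in the connected graph $G_{\cH}$.

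The remaining and decisive step is to replace the averaged term $E[h(W,v)]$ by $h(w,v)$ for an arbitrary fixed neighbor $w$. For this I would appeal to transitivity: if for every pair $w,w'\in N(v)$ there is an automorphism of $\cH$ fixing $v$ and sending $w\mapsto w'$, then, since automorphisms preserve the transition matrix $P$ and hence all hitting times, $h(\cdot,v)$ is constant on $N(v)$, the convex combination $E[h(W,v)]$ collapses to the common value $h(w,v)$, and the lemma follows.

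I expect this last symmetry argument to be the main obstacle. Plain vertex-transitivity supplies automorphisms between vertices but not necessarily ones that fix $v$ while acting transitively on $N(v)$; the genuinely required property is that the stabilizer of $v$ be transitive on its neighbors. For the concrete graph $G_{\cH_{2,k}}$ one would have to verify this against its actual automorphisms (the torus translations together with the dihedral group about a node, the latter only mixing neighbors within a common $L_1$-distance shell), and where full transitivity on $N(v)$ is not available, fall back to the unconditional form of the displayed equation, reading $h(w,v)$ as the first-entrance average $E[h(W,v)]$. In either reading the decomposition $h(u,v)=h(u,N(v))+h(w,v)$ is the workhorse that later feeds into Theorem~\ref{thm:matthew}.
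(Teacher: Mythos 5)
Your argument is, step for step, the paper's own proof: the paper also splits $T_u^v$ at the first visit to $N(v)$, invokes the strong Markov property to get
\[
h(u,v)=h(u,N(v))+\sum_{x\in N(v)}\Pr[W=x]\,h(x,v),
\]
with $W$ the first neighbor of $v$ reached (whose law depends on $u$), and then collapses the average by asserting $h(w,v)=h(w',v)$ for all $w,w'\in N(v)$ ``by the transitivity of $w$ and $w'$.'' Up to and including the averaged identity you and the paper coincide, and that part is correct.

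The obstacle you flag in the last step is not a quibble: it is a genuine error in the paper's proof, and the lemma as stated is false for merely vertex-transitive graphs. Vertex-transitivity supplies an automorphism carrying $w$ to $w'$, but not one fixing $v$; what is needed is exactly what you identify, transitivity of the stabilizer of $v$ on $N(v)$. A counterexample sits inside the very family the paper studies: the walk graph of the one-dimensional $2$-hop torus radio hyper-graph on $7$ nodes is the circulant graph on $\{0,\dots,6\}$ in which each vertex is adjacent to the four vertices at ring-distance $1$ or $2$; it is vertex-transitive, yet solving the hitting-time system (using $h_1=h_6$, $h_2=h_5$, $h_3=h_4$) gives $h(1,0)=76/13$ while $h(2,0)=80/13$ (consistency check: the expected return time to $0$ is $1+\tfrac12(h_1+h_2)=7=1/\pi_0$). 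Since $h(u,v)-h(u,N(v))$ is one fixed convex combination of $\{h(x,v):x\in N(v)\}$, it cannot equal $h(w,v)$ for \emph{every} neighbor $w$ once these values differ. The same obstruction is present in $G_{\cH_{2,k}}$ itself for $k\ge 2$: an automorphism fixing $v$ preserves the number of common neighbors with $v$, so it cannot map a neighbor at $L_1$-distance $1$ to one at $L_1$-distance $k$, and the stabilizer of $v$ is not transitive on $N(v)$.

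Your proposed fallback is also the correct repair, and it is all that the paper needs downstream: keep the unconditional identity $h(u,v)=h(u,N(v))+E[h(W,v)]$, bound $E[h(W,v)]\ge\min_{w\in N(v)}h(w,v)$, and use the fact (valid for vertex-transitive graphs, since hitting times are then symmetric and equal to half the commute time) that $h(w,v)=mR_{wv}$. This gives $\rh(u,v)\le m\left(R_{uv}-\min_{w\in N(v)}R_{wv}\right)$, and the paper's later lower bound $R_{wv}\ge 2/(d+1)$ holds uniformly over all neighbors $w$, so it controls precisely this minimum; Lemma \ref{lem:hitting2D} and Theorem \ref{thm:2D} survive with the same asymptotics. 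In short, your proof matches the paper's, and the step you declined to claim is the step the paper gets wrong.
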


\begin{proof}
We can write $T_{u}^{v}$ as follows:
$$
T_{u}^{v} = T_{u}^{w} + T_{w}^{v}
$$
where $w \in N(v)$ is the first neighbor of v that the walk reaches; clearly, the walk must always reach a neighbor of $v$ before $v$. Then
$$
h(u,v) = E[T_{u}^{v}] = E[T_{u}^{w} + T_{w}^{v}] = E[T_{u}^{w}] + E[T_{w}^{v}] = h(u, N(v)) + \sum_{x\in N(v)} E[T_{w}^{v} \mid w=x]\Pr(x)
$$
where $\Pr(x)$ is the probability that $x$ is the first neighbor of $v$ reached by a walk starting at $u$.
Now, for transitive graphs $h(w,v) = h(w',v)$ if $w,w' \in N(v)$ by the transitivity of $w$ and $w'$ and the result follows.
\qed
\end{proof}

On radio hyper-graphs we can use the above results to bound the the radio hitting time.
\begin{lem}
For a transitive radio hyper-graph $\cH$ the radio hitting time is:
\begin{align}
\rh(u,v) &= h(u,v) - h(w,v) \label{eq:rh1} \\
           &= m(R_{uv} - R_{wv}) \label{eq:rh2}
\end{align}
where $w$ is any neighbor of $v$ and $R_{xy}$ is the effective resistance between $x$ and $y$.
\end{lem}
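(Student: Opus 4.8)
The plan is to prove the two equalities in turn: first I reduce the radio hitting time to an ordinary hitting time to a neighbor set and combine this with the preceding lemma to obtain~(\ref{eq:rh1}), and then I invoke the electrical-network reading of hitting times together with vertex-transitivity to obtain~(\ref{eq:rh2}). For the first step, recall that in the radio hyper-graph $\cH=\cR(G)$ every node $x$ owns exactly one hyper-arc $e_x$ with $\Org(e_x)=x$ and $\Dst(e_x)=N(x)$, so the edge traversed at time $t$ is $e(t)=e_{X(t)}$. Hence $v$ hears the message at step $t$ precisely when $v\in e(t)$, that is when $X(t)=v$ or $X(t)\in N(v)$, i.e.\ $X(t)\in\{v\}\cup N(v)$. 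For $u\neq v$ every trajectory from $u$ must enter $N(v)$ no later than it reaches $v$, so the first time the walk lands in $\{v\}\cup N(v)$ equals the first time it lands in $N(v)$; taking expectations gives $\rh(u,v)=h(u,N(v))$. Substituting the preceding lemma $h(u,v)=h(u,N(v))+h(w,v)$, valid for any neighbor $w$ of $v$, then yields $\rh(u,v)=h(u,v)-h(w,v)$.

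For~(\ref{eq:rh2}) I pass to the electrical network of $G$, which as noted carries unit resistors. The crucial claim is that in a vertex-transitive graph each hitting time is half the commute time, $h(x,y)=m\,R_{xy}$. I would derive this from the standard identity of Tetali expressing hitting times through effective resistances, $h(x,y)=\tfrac12\sum_{z}d(z)\bigl(R_{xy}+R_{yz}-R_{xz}\bigr)$: writing $S_x=\sum_{z}d(z)R_{xz}$ and using $\sum_z d(z)=2m$, this becomes $h(x,y)=m\,R_{xy}+\tfrac12\bigl(S_y-S_x\bigr)$. Since every automorphism preserves degrees and effective resistances and carries any vertex onto any other, $S_x$ is independent of $x$, the correction term cancels, and $h(x,y)=m\,R_{xy}$. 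Inserting $h(u,v)=m\,R_{uv}$ and $h(w,v)=m\,R_{wv}$ into~(\ref{eq:rh1}) gives $\rh(u,v)=m\,(R_{uv}-R_{wv})$.

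The step I expect to be the main obstacle is exactly $h(x,y)=m\,R_{xy}$, i.e.\ the symmetry $h(x,y)=h(y,x)$ forced by vertex-transitivity; this is false for general graphs and rests entirely on the invariance of $S_x=\sum_z d(z)R_{xz}$ under the automorphism group, equivalently on $h(x,y)-h(y,x)=S_y-S_x=0$. Once this symmetry is in hand, the commute-time identity $h(x,y)+h(y,x)=2m\,R_{xy}$ of Chandra \emph{et al.}~\cite{chandra89electrical} pins each hitting time to $m\,R_{xy}$, and the rest — the reduction $\rh(u,v)=h(u,N(v))$ and the bookkeeping with the preceding lemma — is routine.
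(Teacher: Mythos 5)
Your proposal is correct and takes essentially the same route as the paper: the first equality via the identification $\rh(u,v)=h(u,N(v))$ (the walk must transmit from a neighbor of $v$ before it can reach $v$) combined with the preceding lemma's decomposition $h(u,v)=h(u,N(v))+h(w,v)$, and the second equality via hitting-time symmetry on vertex-transitive graphs together with the commute-time identity $C_{xy}=2mR_{xy}$ of Chandra \emph{et al.} The only difference is one of rigor: you actually derive the symmetry $h(x,y)=h(y,x)$ from Tetali's resistance formula and the automorphism-invariance of $S_x=\sum_z d(z)R_{xz}$, whereas the paper merely asserts this fact (stating it, with a typo, as ``$h(x,y)=h(x,y)$''), so your write-up fills in a genuine gap in the paper's terse argument while following the same outline.
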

\begin{proof}
Eq \ref{eq:rh1} follows from the fact that, on radio hyper-graph, if a message reaches a neighbor $w$ of $v$, in the next step $v$ will hear the
message since $w$ can transmit on a single edge. Eq. \ref{eq:rh2} follows from the fact that on transitive graphs $h(x,y)=h(x,y)$ for any pair of
nodes and therefore using Eq. \ref{eq:Ch1} we have $h(x,y) = C_{xy}/2 = mR_{xy}$.
\qed
\end{proof}

To bound $\rh(u,v)$ we now prove upper bound on $R_{uv} - R_{wv}$. We do so by giving an upper bound for $R_{uv}$ and a lower bound for $R_{wv}$ where
$w$ is a neighbor of $v$. We present an upper bound for $R_{uv}$ using the method of \emph{unit flow}, in particular, we construct a legal unit flow
from $u$ to $v$ and by the \emph{Thomson Principle} \cite{doylesnell84} the power of the flow is an upper bound for $R_{uv}$. The main property of the flow (as opposed to a
very similar flow construction given in \cite{avin07on-the-cover}) is that both nodes $u$ and $v$ use all their edges in the flow with a flow $\frac{1}{d}$ on each
such edge. Let $d(u,v)$ denote the minimum distance in hops from $u$ to $v$ in $G$, then for every two nodes we have the following bound:

\begin{lem}\label{lem:ruv}
For any two nodes $u$ and $v$, the effective resistance $R_{uv}$ in $G_{\cH_{2,k}}$ is:
\begin{align}
R_{uv} \le \frac{2}{d} + O\left(\frac{\log(d(u,v))}{d^{2}}\right) \label{eq:ruv}
\end{align}
where $d= \Theta(k)$ s the degree of the uniform hyper-graph.
\end{lem}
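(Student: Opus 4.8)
The plan is to invoke the Thomson (flow) principle: since $G_{\cH_{2,k}}$ carries unit resistors, $R_{uv}\le \sum_e f(e)^2$ for \emph{any} unit flow $f$ from $u$ to $v$, so it suffices to exhibit one flow of small energy. I would build $f$ in three regimes indexed by the $L_1$-distance $\rho$ to $u$ (and symmetrically to $v$). \emph{At the source} I push exactly $1/d$ along each of the $d$ edges incident to $u$, and \emph{at the sink} I collect exactly $1/d$ along each of the $d$ edges incident to $v$; because $u$ and $v$ each have degree $d$ this automatically forces the flow to use all their incident edges at level $1/d$ --- the property emphasized in the statement, and the feature distinguishing this construction from the one in~\cite{avin07on-the-cover} --- and it contributes $d\cdot(1/d)^2=1/d$ of energy at each endpoint, i.e.\ the leading $2/d$. \emph{In the bulk}, for $\Theta(k)\lesssim\rho\lesssim d(u,v)$, I route the unit current radially outward from $u$ toward $v$, spreading it as evenly as possible over the edges crossing the $L_1$-sphere of radius $\rho$.

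The crux is the bulk estimate. A cut at $L_1$-radius $\rho$ has perimeter $\Theta(\rho)$, and because every node reaches all nodes within distance $k$, the number of edges crossing such a cut is $\Theta(\rho\,k^3)$: per unit length of a straight cut, a left node at horizontal offset $-a$ and a right node at offset $b$ are joined whenever $a+b+|\Delta y|\le k$, so the count is $\sum_{s\le k}\Theta(s)\cdot\Theta(k-s)=\Theta(k^3)$. Distributing the unit current uniformly over these $\Theta(\rho k^3)$ edges makes each carry $\Theta(1/(\rho k^3))$, so the energy of the crossing at radius $\rho$ is $\Theta(1/(\rho k^3))$. Summing over $\rho$ and dividing by the $\Theta(k)$-fold overcount (each edge has length $\le k$, hence lies in $\Theta(k)$ consecutive shells) yields a bulk energy of order $\tfrac{1}{k}\sum_{\rho=\Theta(k)}^{d(u,v)}\tfrac{1}{\rho k^3}=\Theta\!\left(\tfrac{\log d(u,v)}{k^4}\right)$. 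Writing $d$ for the degree (here $d=\Theta(k^2)$), this is exactly the claimed $O(\log(d(u,v))/d^2)$ term, and adding the $2/d$ from the two endpoints gives the bound.

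The main obstacle is turning this continuum/cut picture into a genuine, flow-conserving assignment of edge currents on the $L_1$-torus and then honestly bounding $\sum_e f(e)^2$ rather than a sum of independent per-cut energies. Concretely I must (i) define the flow so that conservation holds at every node, (ii) glue the three regimes at $\rho\approx k$ so the $1/d$-per-edge source/sink flow matches the radial bulk flow without creating excess energy, and (iii) justify the overcounting factor $1/k$ above rather than merely asserting it. A secondary technical point is the torus wrap-around: the $L_1$-spheres grow like $\rho$ only while $\rho\lesssim\sqrt{n}/2$, so for the range $\rho\le d(u,v)$ I would route the flow inside a single patch that behaves like the infinite lattice, which suffices for an upper bound. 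Finally, since only an upper bound on the energy is needed, uniform (rather than energy-optimal) spreading within each shell is acceptable and keeps the construction fully explicit.
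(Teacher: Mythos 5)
Your proposal is correct and follows essentially the same route as the paper's own proof: both invoke the Thomson principle and build an explicit unit flow whose endpoint stages push $1/d$ through each of the $d$ edges incident to $u$ and to $v$ (yielding the leading $2/d$ term, and indeed the feature the paper highlights against \cite{avin07on-the-cover}) and whose bulk spreads the current over linearly growing fronts, so the energy is a harmonic sum of order $\log(d(u,v))/d^{2}$ — you even correctly read $d=\Theta(k^{2})$ for $\cH_{2,k}$, where the lemma's statement sloppily writes $d=\Theta(k)$. The only substantive difference is bookkeeping in the bulk: the paper partitions the region into $\Theta(k)\times\Theta(k)$ bins and sends the flow through layers of $l$ bins (so every flow edge joins consecutive layers and no overcounting arises), whereas your unit-width $L_{1}$-shells force the $\Theta(k)$-fold overcount correction you flag in point (iii) — a step that is justifiable, since under uniform spreading the energy is dominated by edges of $L_{1}$-length $\Theta(k)$, but one the paper's coarser layering sidesteps.
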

\begin{proof}
We build a unit flow from $u$ to $v$ in the following way. Consider the shortest line between $u$ and $v$, we consider in the flow  only nodes
that lay inside a square that this line is on its diagonal, as shows in Fig. \ref{fig:figs/flow_A}. We divide this big square into small bins such
that $u$ (and $v$) are in the center of a minimal square that covers all the neighbors of $u$ and is build from 16 bins, see Fig.
\ref{fig:figs/flow_A} for clarification.
Note that in each bin we have $\Theta(d/16)$ nodes and let $d'$ denote the number of nodes in the bin with the minimum number
of nodes. Every node in a bin has an edge to all the nodes it the 9 bins around
its bin. The number of edges between two adjacent bins is $\theta(d^{2})$. The basic idea of the follow, similar to flows in \cite{avin07on-the-cover}, is to build the
flow in layers. In our flow, layers will increased linearly  (see Fig. \ref{fig:figs/flow_B}) until the middle point on the shortest line from $u$ to
$v$ and then will decrease linearly until reaching $v$. Layer $l$ will consist $l$ bins $(B_{l}^{1}, B_{l}^{2}, \dots B_{l}^{l})$, and in each bin $d'$
nodes will participate in the flow to a total of $ld'$ nodes in the layer.
Every layer will forward a unit flow to the next layer, in every layer the unit flow is divided uniformly between all the nodes of the layer.
The number of edges between layer $l$ and $l+1$ will be $ld'^2$ and therefore each edge between the layers carries $1/ld'^2$ of the unit flow.
The uniformity  of the flow between layers is guaranteed by each bin in layer $l$ selecting $\lceil (l+1)d'/l \rceil$ nodes  in layer $l+1$. Since
the two sets of nodes form a complete bi-partite graph the uniformity is guaranteed. An example of this kind of matching is given in Fig.
\ref{fig:figs/flow_C}.
Our first layer will consist of 8 bins  $(B_{8}^{1}, \dots B_{8}^{8})$ as presented in Fig. \ref{fig:figs/flow_B}. Therefore if there are $L$ layers
until the midpoint the power of the flow in this edges is
\begin{align}
\sum_{l=8}^{L} \sum_{e \in l,l+1}\frac{1}{(ld'^2)^{2}} &= \sum_{l=8}^{L} \frac{ld'^2}{l^{2}d'^{4}} \\
&= \frac{1}{d'^{2}}\sum_{l=8}^{L} \frac{1}{l} \\
&= O(\frac{\log L}{d^{2}})
\end{align}
From the midpoint the flow will be in decreasing layers, mirroring the first part of the flow until the last layer consisting as well 8 bins.
The first and last part of the flow is getting the unit flow from $u$ to the first layer and from the last layer to $v$.
This is explained with the help of Fig. \ref{fig:figs/flow_B}. First, node $u$ uses all of its $d$ edges and each carries $1/d$ of the flow.
Second, every neighbor of $u$ has a flow strategy that depends on its location in the 16 bins around $u$. This strategy guarantees the every bin in the
first layer will get $1/8$ of the flow and the unit flow will be divided uniformly among all the nodes in the first layer.
As before between bins each edge carries $\Theta(\frac{1}{d^{2}})$ of the flow. The number of edges (note including the edges adjacent to $u$) that
carries the flow to the first layer is less than $(1+2+3+4+5+6+7)d^{2}$, therefore the power in this part of the flow is
$O(24d^{2}/d^{4})=O(1/d^{2})$. A symmetric flow is when the flow goes from the last layer to $v$.
The total power of the flow is then the power over the following edges:
\begin{enumerate}
\item from $u$ to $N(u)$.
\item from $N(u)$ to the first layer.
\item from the first to the last layer.
\item from the last layer to $N(v)$.
\item from $N(v)$ to v.
\end{enumerate}
Putting the numbers together the result follows:
\begin{align*}
P(c) &= \frac{d}{d^{2}} + O(\frac{1}{d^{2}}) + O(\frac{\log L}{d^{2}}) + O(\frac{1}{d^{2}}) + \frac{d}{d^{2}} \\
&=\frac{2}{d} + O(\frac{\log d(u,v)}{d^{2}})
\end{align*}
\qed
\end{proof}

We now give a simple lower bound for $R_{wv}$ when $(w,v)$ is an edge in $G$
\begin{lem}
For any two nodes $w$ and $v$ such that the edge $(w,v) \in G_{\cH_{2,k}}$
\begin{align}
R_{wv} \ge \frac{2}{d +1} \label{eq:rwv}
\end{align}
\end{lem}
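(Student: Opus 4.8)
The plan is to get the lower bound by a short-circuiting (\emph{Rayleigh monotonicity}) argument, using crucially that $G_{\cH_{2,k}}$ is $d$-regular so that both endpoints of the edge $(w,v)$ carry exactly $d$ incident unit resistors. Recall that $R_{wv}$ equals the energy dissipated by the unit current flow from $w$ to $v$, that merging (identifying) vertices can only \emph{decrease} effective resistance, and that the electrical network here is all $1$-ohm resistors. The naive estimate obtained from the maximum principle — each neighbour potential lies in $[0,R_{wv}]$, so flow conservation at $w$ gives $R_{wv}\ge 1/d$ — is off by essentially a factor of two, so the real content is to exploit both endpoints symmetrically.

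First I would collapse the whole network to three nodes: keep $w$ and $v$ distinct and merge every other vertex of $G$ into a single node $s$. By Rayleigh's monotonicity law this identification does not increase effective resistance, so the resistance $R'_{wv}$ of the collapsed network satisfies $R'_{wv}\le R_{wv}$; it then suffices to compute $R'_{wv}$ and show it equals $2/(d+1)$. Next I would read off the collapsed conductances: the edge $(w,v)$ survives as one unit resistor (conductance $1$); since $w$ has degree $d$ with exactly one edge to $v$, its remaining $d-1$ edges all land on $s$, giving conductance $d-1$ between $w$ and $s$, and symmetrically conductance $d-1$ between $v$ and $s$ (all edges internal to the merged set become self-loops at $s$ and carry no current). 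The series path $w$–$s$–$v$ has conductance $(d-1)/2$, and in parallel with the direct edge the total conductance is $1+(d-1)/2=(d+1)/2$, so $R'_{wv}=2/(d+1)$. Combining with $R'_{wv}\le R_{wv}$ yields $R_{wv}\ge 2/(d+1)$.

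The one step I would check most carefully is the edge bookkeeping after the merge: $d$-regularity must be invoked to guarantee that each of $w$ and $v$ contributes exactly $d-1$ resistors to $s$ (common neighbours of $w$ and $v$ still contribute one edge from each endpoint, so the count is not diminished). As a sanity check, the bound is tight on $K_{d+1}$, where $R_{wv}=2/(d+1)$ exactly and the graph literally equals the collapsed three-node network, confirming that no cruder reduction can do better.

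If one prefers to avoid monotonicity, the same constant falls out of the energy of the true flow. Let a unit current flow from $w$ to $v$. Flow conservation forces the net current carried by the $d-1$ edges at $w$ other than $(w,v)$ — and likewise the $d-1$ such edges at $v$, which are edge-disjoint from the first bundle since only $(w,v)$ touches both endpoints — to equal $1-R_{wv}$ each. Cauchy--Schwarz bounds the energy on each bundle below by $\tfrac{1}{d-1}(1-R_{wv})^2$, and adding the $R_{wv}^2$ dissipated on $(w,v)$ gives $R_{wv}=\text{energy}\ge R_{wv}^2+\tfrac{2}{d-1}(1-R_{wv})^2$, which rearranges (dividing by $1-R_{wv}>0$) to $R_{wv}(d+1)\ge 2$.
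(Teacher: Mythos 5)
Your proposal is correct and its main argument is essentially identical to the paper's own proof: the paper also shorts all vertices other than $w$ and $v$ into a single node, uses $d$-regularity to get $d-1$ parallel unit resistors on each side plus the direct edge, and computes the collapsed resistance as $\frac{2}{d+1}$ via Rayleigh monotonicity. Your supplementary energy/Cauchy--Schwarz argument is a valid alternative not in the paper, but the core route is the same.
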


\begin{proof}
We lower $R_{wv}$ by the short/cut principal \cite{synge51,doylesnell84}, namely, shorting any two nodes in $G$ only decreases the resistance. We short all the nodes of the
graph (but $w$ and $v$) into one node called $x$. Since  $d(w)=d(v)=d$ and there is an edge $(w,v)$ the resulting graph has in addition $d-1$
parallel edges from $w$ to $x$ and from $x$ to $v$. In this graph the resistance between $w$ and $v$ is $\frac{2}{d +1}$ and the results follows.
\qed
\end{proof}

We now have everything to bound the maximum radio hitting time of $\cH_{2,k}$
\begin{lem}\label{lem:hitting2D}
The maximum radio hitting time, $\rh_{\max}$, of $\cH_{2,k}$ is
$$
\rh_{\max} = O(\frac{n}{d}\log \frac{n}{d})
$$
where $d=\Theta(k)$ is the degree of the uniform hyper-graph.
\end{lem}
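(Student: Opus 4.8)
The plan is to chain together the three preceding estimates and then translate everything into the parameters $n$ and $d$. The starting point is the identity in Eq.~\ref{eq:rh2}, $\rh(u,v)=m(R_{uv}-R_{wv})$, which holds for any transitive radio hyper-graph and in particular for $G_{\cH_{2,k}}$, with $w$ an arbitrary neighbor of $v$. First I would substitute into it the upper bound on $R_{uv}$ from Lemma~\ref{lem:ruv} and the lower bound $R_{wv}\ge \frac{2}{d+1}$ from Eq.~\ref{eq:rwv}.

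The decisive step is the cancellation of the leading terms:
$$R_{uv}-R_{wv}\le \frac{2}{d}-\frac{2}{d+1}+O\!\left(\frac{\log d(u,v)}{d^2}\right)=\frac{2}{d(d+1)}+O\!\left(\frac{\log d(u,v)}{d^2}\right)=O\!\left(\frac{\log d(u,v)}{d^2}\right),$$
since $\frac{2}{d(d+1)}=O(1/d^2)$. I expect this to be the crux of the argument: each resistance is individually $\approx 2/d$, so without the cancellation their difference, once multiplied by $m=\Theta(nd)$, would leave a spurious term of order $n$. It is precisely because the flow constructed in Lemma~\ref{lem:ruv} forces both endpoints to spread $1/d$ over all their incident edges—matching the short-circuit lower bound used for $R_{wv}$—that the two $2/d$ contributions agree to leading order and the radio hitting time stays small.

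Next I would pin down $m$. The graph $G_{\cH_{2,k}}$ is $d$-regular on $n$ vertices with unit-resistance edges, so it has $m=nd/2$ edges, and the commute-time identity invoked above reads $h(x,y)=mR_{xy}$ with exactly this $m$. Multiplying the displayed bound through by $m$ gives
$$\rh(u,v)\le \frac{nd}{2}\cdot O\!\left(\frac{\log d(u,v)}{d^2}\right)=O\!\left(\frac{n\log d(u,v)}{d}\right).$$

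Finally I would bound the hop distance $d(u,v)$. On the $\sqrt n\times\sqrt n$ torus any two nodes are at $L_1$-distance $O(\sqrt n)$, while a degree-$d$ neighborhood has $L_1$-radius $\Theta(\sqrt d)$, so each step advances $\Theta(\sqrt d)$ and hence $d(u,v)=O(\sqrt{n/d})$, giving $\log d(u,v)=O(\log(n/d))$. Taking the maximum over all pairs $u,v$ then yields $\rh_{\max}=O\!\left(\frac{n}{d}\log\frac{n}{d}\right)$, as claimed. Apart from the cancellation in the second paragraph—the only genuinely delicate point—the remaining steps are routine bookkeeping on the regularity and diameter of $G$.
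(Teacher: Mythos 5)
Your proposal is correct and follows essentially the same route as the paper's own proof: plug the upper bound of Lemma~\ref{lem:ruv} and the lower bound of Eq.~\ref{eq:rwv} into Eq.~\ref{eq:rh2}, exploit the cancellation $\frac{2}{d}-\frac{2}{d+1}=\frac{2}{d(d+1)}=O(1/d^{2})$, multiply by $m=\Theta(nd)$, and finish by bounding $\log d(u,v)$ via the hop diameter of the torus. The only (cosmetic) difference is that your bookkeeping of the degree--diameter relation is the self-consistent one ($d=\Theta(k^{2})$, hop diameter $O(\sqrt{n/d})$), whereas the paper writes $d=\Theta(k)$ and diameter $\sqrt{n}/d$; both give $\log d(u,v)=O(\log(n/d))$ and hence the stated bound.
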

\begin{proof}
From Equations \ref{eq:rh2}, \ref{eq:ruv} and \ref{eq:rwv} :
\begin{align*}
\rh(u,v) &\le nd\left(\frac{2}{d} + O\left(\frac{\log(d(u,v))}{d^{2}}\right)  -\frac{2}{d +1} \right) \\
           &= nd \left(\frac{2}{d(d+1)} + O\left(\frac{\log(d(u,v))}{d^{2}}\right) \right)\\
           &=\frac{n}{d+1} + O\left(n\frac{\log(d(u,v))}{d}\right) \\
           &=O\left(\frac{n}{d}\log(d(u,v))\right)
\end{align*}
The results now follows since the most distant pair of nodes are at distance $d(u,v) = \frac{\sqrt{n}}{d}$.
\qed
\end{proof}

\infig{2}{figs/flow_A}{The nodes considered in a flow between $u$ and $v$.}
\infig{1.5}{figs/flow_B}{The flow between the $d$ neighbors of $u$ to the first layer of the flow.}
\infig{.5}{figs/flow_C}{An example of flow between two consecutive layers.}

\subsection{Proof of theorem \ref{thm:2D}}
\begin{proof}[of Theorem \ref{thm:2D}]
The results follows directly from Lemma \ref{lem:hitting2D} and Theorem \ref{thm:matthew}.
\end{proof}

\section{Conclusions}\label{sec:conc}
In this paper, we study the theoretical properties of simple random walks on wireless radio networks and simple random walks on hyper-graphs. The
techniques we developed in this paper shows that the cover time of a hyper-graph can be exponentially bigger than the cover time of a graph. This
suggests that one way to avoid this exponential time problem in wireless  networks is to have only one hyper-edge per node. We show that a general bound on the cover
time also holds for hyper-graphs, i.e., the cover time is less than $m \cdot n \cdot r(\cH)$. We also show that the radio hitting time can be computed in
polynomial time in hyper-graphs. We believe that random walks on hyper-graphs/radio networks will play an important role in data mining and in wireless
networks in the near future, as simple random walks on graphs did in the past.

\bibliographystyle{acm}
\bibliography{hyper,bsf}

\newpage
\appendix
\section*{APPENDIX: Proofs}
\section{Known Results on Random Walks on Graphs}
Let $G=(V,E)$ be an undirected graph with $|V|=n$ vertices with
$|E|=m$ edges. Let $N(G)$ be the electrical network having a node for each vertex in $V$, and, for every edge in $E$,
having a one ohm resistor between the corresponding nodes in $N(G)$.
Throughout this paper, we will abuse this notation and denote the electric network as the graph $G$ instead of the electrical network $N(G)$.
Let $R_{vu}$ be the effective resistance between the two node $v,u \in V$.
In \cite{chandra89electrical} theorem 2.1, the commute time between every pair of nodes $v,u \in V$ is:
\begin{align}\label{eq:Ch1}
C_{vu}=2mR_{vu}
\end{align}
Let $R=\max_{v,u\in V}R_{vu}$.
Let $N'(G)$ be an edge-weighted complete graph
having a vertex $v'$ for every vertex $v$ in $V$, and having an edge $(v',u')$ of weighted $R_{vu}$ for each pair of
(not necessarily adjacent) vertices $v,u$ in $V$.
Let $R^*$ be the weight of the minimum spanning tree in $N'(G)$.
In the journal \cite{chandra89electrical} theorem 2.4, the maximum cover time $C_G$ is:
\begin{align}\label{eq:Ch2}
mR \leq C_G \leq 2m\cdot \min(R(1+\ln(n)),R^*)
\end{align}
For an electrical network with $r_{vu}$ resistance between the nodes,
in the journal \cite{tetali91random} theorem 6, we have the following result:
\begin{align}\label{eq:Ch3}
\sum\limits_{\{v,u\} \in E}\frac{R_{vu}}{r_{vu}}=n-1
\end{align}
We will extend our notation to multi-graphs.
Let $G=(V,E)$ be a multi-graph without self loops with $|V|=n$ vertices with $|E|=m$ edges.
We denote $N(G)$ as an electrical network having a node for each vertex in $V$, and, for every edge in $E$,
having one ohm resistor between the corresponding nodes in $N(G)$.
We will abuse the notation and denote $G$ as the electrical network in the same way as in normal graph.

Equation~\ref{eq:Ch1} can be used for an electric network on a multi-graph,
and this can be proven in the same way as the proof for the simple graph.

\section{Proof of Lemma \ref{lem:lowermnr}}

\noindent \textbf{Lemma \ref{lem:lowermnr}.} \emph{
For $c\in \mathbb{N}, c \le n/2$ there exist a hyper-graph $\cH=(V,\mathcal{E}),|V|=n,|\mathcal{E}|=m$, $r(\cH)=c$ with
$C \ge \RC \ge \rh = \Omega(mn\cdot c)$.}

\begin{proof}
We build a hyper-graph in the following way:\\ We build a uniform hyper-clique(n',c) $\mathcal{C}=(V',\mathcal{E}')$.
We add a uniform hyper-line(n',2) $\mathcal{L}=(V'',\mathcal{E}'')$.
We join the hyper-line and the hyper-clique at a node $s$.
The node $s$ is a node at one of the ends of the hyper-line;
it will be a node in the hyper-line $\mathcal{L}$ and in the hyper-clique $\mathcal{C}$.
The hyper-graph will be the following $\cH=(V,\mathcal{E}),V=V'\cup V'',\mathcal{E}=\mathcal{E}' \cup \mathcal{E}''$.
The hyper-graph $H$ will have $n=|V|=2n'-1$ nodes and
$m=|\mathcal{E}|$ edges.
We use the bi-partite graph $B(\cH)=(V\cup \cE,E_B)$.
We use the same way as lemma ~\ref{l:l20} on the
commute time between the two ends of the uniform hyper-line $\cL$ to get a lower bound of $\Omega(mn\cdot c)$
on the radio cover time and radio hitting time of the hyper-graph.
\end{proof}

%

\end{document}